\begin{document}

\title{A Core Calculus for Documents}
\subtitle{Or, Lambda: the Ultimate Document}

\author{Will Crichton}
\orcid{0000-0001-8639-6541}
\author{Shriram Krishnamurthi}
\orcid{0000-0001-5184-1975}
\affiliation{
  \department{Department of Computer Science}
  \institution{Brown University}           
  \city{Providence}
  \state{Rhode Island}
  \postcode{02912}
  \country{USA}                    
}
\email{wcrichto@brown.edu}

\begin{abstract}
Passive documents and active programs now widely comingle. Document languages include Turing-complete programming elements, and programming languages include sophisticated document notations. However, there are no formal foundations that model these languages. This matters because the interaction between document and program can be subtle and error-prone. In this paper we describe several such problems, then taxonomize and formalize document languages as levels of a document calculus. We employ the calculus as a foundation for implementing complex features such as reactivity, as well as for proving theorems about the boundary of content and computation. We intend for the document calculus to provide a theoretical basis for new document languages, and to assist designers in cleaning up the unsavory corners of existing languages. 
\end{abstract}

\begin{CCSXML}
<ccs2012>
<concept>
<concept_id>10010405.10010497.10010510.10011689</concept_id>
<concept_desc>Applied computing~Document scripting languages</concept_desc>
<concept_significance>500</concept_significance>
</concept>
<concept>
<concept_id>10003752.10003790.10011740</concept_id>
<concept_desc>Theory of computation~Type theory</concept_desc>
<concept_significance>500</concept_significance>
</concept>
</ccs2012>
\end{CCSXML}

\ccsdesc[500]{Applied computing~Document scripting languages}
\ccsdesc[500]{Theory of computation~Type theory}

\keywords{document languages, markup, templates}


\maketitle

\section{Introduction}

We live in a golden age of document languages. We have time-tested methods for authoring stylized content, such as the widely-used Markdown language. Moreover, new commercially-backed languages like Typst\,\cite{maedje2022typst}, Markdoc\,\citeurl{https://markdoc.dev/}, Quarto\,\citeurl{https://quarto.org/}, and MDX\,\citeurl{https://mdxjs.com/} are providing ever more powerful ways of authoring documents. These languages are built within a rich tradition established by venerable systems like {\TeX}\,\cite{knuth1986texbook} and Scribe\,\cite{reid1980scribe}, and continuing with recent languages like Scribble\,\cite{flatt2009scribble} and Pandoc\,\citeurl{https://pandoc.org/}.

Many of these systems are as much \emph{programming} languages as document languages. Document authors want to systematically format data collections, abstract over similar text, create abbreviations, hide text for anonymous review, and so on. These tasks benefit from programmatic control over the document. 
Conversely, the lack of programmability in languages like HTML has generated a cottage industry of document \emph{meta}languages, often called \emph{template} languages. For example, general-purpose languages with templates include PHP, Javascript (with JSX), and Lisp (via quasiquotes). Specialized template languages include Jinja 
for Python and Liquid 
for Ruby. 

This proliferation of document languages raises foundational questions. What are the common characteristics of document languages? How do they relate? Are existing languages well-designed, or can we identify what appear to be flaws in their design? Given that documents are programs, can we reason about them? Our goal in this work is to shed light on these questions through the design of a \emph{core calculus for documents} --- that is, a formal model for the essential \emph{computational} features of a document language. This paper describes the document calculus in four parts:



\begin{enumerate}
    \item \textbf{We motivate the work with case studies about issues in the semantics of existing document languages} (\Cref{sec:the-bad-parts}).
    We show how document languages from both academia and industry can lead to unexpected behavior when composing content and computation. The case studies demonstrate the need to carefully study features like templates and interpolation.

    \item \textbf{We incrementally describe the formal semantics of the document calculus} (\Cref{sec:doc-calc}).
    We construct the semantics in eight levels of the document language design space drawn from two key dimensions: document domain (strings or trees) and document constructors (literals, programs, template literals, template programs). We show how our choice of semantics corresponds to real-world document languages and also mitigates the issues in \Cref{sec:the-bad-parts}.

    \item \textbf{We demonstrate how the document calculus can provide a foundation for modeling complex document features} (\Cref{sec:article-extensions}).
    We extend the calculus with three features: references, reforestation, and reactivity. Each feature is common to many document languages and stresses a different computational aspect of the calculus.

    \item \textbf{We use the document calculus to formally reason about document programs} (\Cref{sec:reasoning}).
    We formalize two useful theorems involving the document calculus. First, we show that our choice of template semantics produces well-typed terms. Second, we prove the correctness of a strategy for efficiently composing references and reactivity.
\end{enumerate}

\noindent We conclude with related work (\Cref{sec:relatedwork}) and implications for language design (\Cref{sec:discussion}).

\section{Document Languages: The Bad Parts}
\label{sec:the-bad-parts}

The foundational concept of all document languages is the \emph{template}. Templates are a kind of generalized data literal. Templates interleave computation (like expressions and variable bindings) into content (like strings and trees). The good part of a template is its brevity --- with appropriate concrete syntax, a template can be more concise than a equivalent program without templates. This section is about the bad parts: \emph{when templates go wrong.}

We present a series of case studies about how particular designs for template semantics cause problems at the boundary of content and computation. In each case study, we present a reasonable-looking program that works as expected. Then we show how a small change can quickly produce unreasonable results. These issues are not fatal flaws in the language; each has a workaround that would likely be known to seasoned users. Rather, we are simply drawing attention to common points of friction that could be both clarified and improved with a formal semantics.

\vspace{-0.5em}
\subsection{PHP and the Global Mutable Buffer}
\label{sec:php-case-study}

PHP\,\citeurl{https://www.php.net/} is a popular programming language for web servers. A PHP program is itself a document, as any text placed outside a \verb|<? tag ?>| is emitted to the client. However, templates in PHP are not pure. They do not construct values, but rather write to a global output buffer. This impure semantics requires that functions must be called in exactly the right place.

\vspace{-0.3em}
\begin{twocol}
    \begin{col}
        \begin{minipage}{0.9\linewidth}
    \begin{minted}{php}
<? function mkelems($list) { 
  foreach ($list as $x) { ?>
    <li><?= $x ?></li>
  <? }}
function mklist($list) { ?>
  <ul><? mkelems($list) ?></ul>
<? }
mklist(["Hello", "World"]) ?>
\end{minted}
        \end{minipage}
    \end{col}
    \begin{col}
        For example, consider a PHP program that factors a bulleted-list generator into two functions. On the left, the function \verb|mklist| wraps the result of \verb|mkelems| in a \verb|ul| tag. The \verb|mkelems| function loops through the list and generates an \verb|li| for each element. Text within a function but outside the question-mark-delimited ranges is emitted to the global buffer when the function is called.
    \end{col}
\end{twocol}
\vspace{-0.5em}

\begin{twocol}
    \begin{col}
        Now, say the programmer wants to extend \verb|mkelems| to return data about \verb|$list| that is added as an attribute to the \verb|ul| inside \verb|mklist|. The programmer can easily modify \verb|mkelems| to return e.g. the list length. But how should the programmer modify \verb|mklist|? They must call \verb|mkelems| \emph{before} generating the \verb|ul| in order to access the return value. But they also must call \verb|mkelems| \emph{after} generating the opening tag in order to position the list elements correctly. 
    \end{col}
    \begin{col}
        \flushright
        \begin{minipage}{0.9\linewidth}
        \begin{minted}{php}
<?  function mkelems($list) { 
  foreach ($list as $x) { ?>
    <li><?= $x ?></li>
  <? }
  return count($list);
}
function mklist($list) { 
  $n = mkelems($list); ?>
  <ul data-n="<?= $n ?>"> ?? </ul>
<? }
mklist(["Hello", "World"]) ?>
\end{minted}
    \end{minipage}\end{col}
\end{twocol}
\vspace{-0.5em}

The programmer cannot easily compose the template mechanism with other concerns like returning auxiliary information. It is possible to work around this limitation through careful use of PHP's output buffering facilities, but those APIs are also stateful and hence prone to errors. The key takeaway: an impure semantics for templates can restrict the ability of authors to design easily-composable document abstractions.

\subsection{React and the Unresponsive Component}
\label{sec:react-toc}

React\,\citeurl{https://reactjs.com/} is a popular Javascript framework for writing reactive interfaces in the browser. A React program is a tree of components that encapsulate the state and view for a visual object. Ideally, React re-renders a component when its dependencies update. However, the kinds of components used in documents cannot always express their dependencies in terms understood by the framework.

\vspace{-0.3em}
\begin{twocol}
\begin{col}
\begin{minipage}{0.95\linewidth}
\begin{minted}{jsxlexer.py:JsxLexer -x}
function Toc() {
  let [hdrs, setHdrs] = useState(()=>[]);
  useEffect(() => {
    let h1s = Array.from(
      document.querySelectorAll('h1'));
    setHdrs(h1s.map(node => 
      node.textContent));
  }, []);
  return <ul>
    {hdrs.map(text => <li>{text}</li>)}
  </ul>;
}
\end{minted}
\end{minipage}
\end{col}   
\begin{col}
    Say a programmer wants to implement a table of contents. A na\"ive implementation would be the program on the left. The \verb|headings| array contains a list of the headings in the document, initially empty. The function \verb|useEffect| indicates that the provided callback should execute after the component renders. The callback uses the browser's DOM API to find all headers in the document. It extracts their text content and saves that array as local state. The returned template creates a bulleted list with a bullet per heading.
\end{col}
\end{twocol}

\vspace{-0.7em}
\begin{twocol}
\begin{col}
Here is an example application that uses the component. The \verb|App| component creates a local boolean state \verb|show| that is initialized to false. Then the template returns both a persistent header and a conditional header. A button is rendered that changes the condition on click, and then the table of contents is rendered. 
\end{col}
\begin{col}
\flushright
\begin{minipage}{0.95\textwidth}
\begin{minted}{jsxlexer.py:JsxLexer -x}
function App() {
  let [show, setShow] = useState(false);
  return <><Toc />
    <h1>Introduction</h1>
    {show ? <h1>Appendix</h1> : null}
    <button onClick={()=>setShow(!show)}>
      {show ? "Hide" : "Show"} Appendix
    </button></>;
}
\end{minted}    
\end{minipage}
\end{col}
\end{twocol}

\vspace{-0.7em}
\begin{twocol}
    \begin{col}
    \begin{minipage}{0.95\textwidth}
        \begin{tcolorbox}[spartan,
  frame empty,
  boxsep=0mm,
  left=1mm,right=1mm,top=1mm,bottom=1mm,
  colback=bg]
\centering
\includegraphics[width=0.9\textwidth]{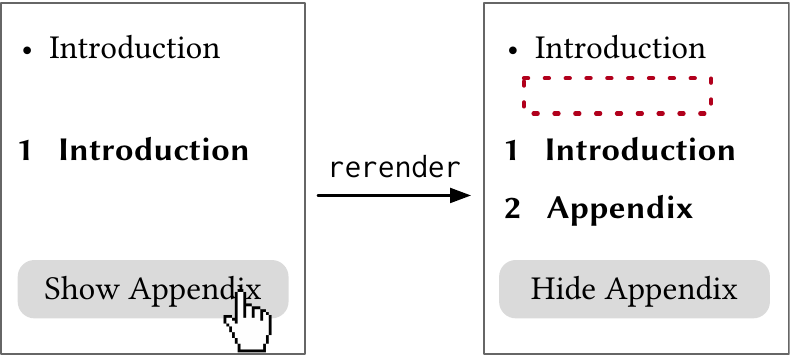}
\end{tcolorbox}
\end{minipage}
    \end{col}
    \begin{col}
        This application will correctly render on the first pass, showing a table of contents with one bullet for \quot{Introduction}. However, when the user clicks on the toggle button, the appendix header will appear, but the table of contents will not update, indicated by the dotted red rectangle.
    \end{col}
\end{twocol}

The issue is that the header-query computation in the \verb|Toc| component is not legible to the reactive runtime --- React simply cannot express the concept that a component is dependent on the content of other components' views. Therefore, React does not know to re-render the table of contents when \verb|App| changes its heading structure. And this issue is not a chance mistake --- as of September 2023, the top five results on Google for ``react table of contents'' are tutorials that all recommend this strategy. ToC implementations in other reactive frameworks like Svelte work similarly\,\citeurlwith{https://github.com/janosh/svelte-toc/blob/cc7f3d86b047c66b22547dabf9c0b34f32fa158c/src/lib/Toc.svelte\#L50}{janosh/svelte-toc}.
The key takeaway: computations over documents can have subtle dependency structures, which easily leads to bugs when combined with reactivity.

\subsection{Scribble and the Improper Loop}
\label{sec:scribble-study}

Scribble\,\cite{flatt2009scribble} is a document-oriented dialect of Racket. Scribble provides a template language that can interpolate computation via @-expressions, which desugar into standard Racket code. This desugaring can produce unexpected interactions with macros.
For example, this program uses the \verb|for/list| macro to map over a list of pairs:

\begin{twocol}
\begin{col}
\begin{minipage}{0.95\linewidth}
\begin{minted}{racket}
@(define pairs 
  (list (list "A" "B") (list "C" "D")))
@itemlist{
  @for/list[([p pairs])]{
    @item{@(car p) @(cadr p)}
  }
}
\end{minted}
\end{minipage}
\end{col}
\begin{col}
In this example, the \verb|@itemlist| represents the list container, and \verb|@item| represents a bullet in the list. The for-loop produces one bullet for each pair, creating the list:
\begin{itemize}
\item A B
\item C D
\end{itemize}
\end{col}
\end{twocol}

\begin{twocol}
\begin{col}
Now say that the programmer wanted to change the code to flatten the list. A programmer might expect that factoring the \verb|car| and \verb|cadr| into separate \verb|@item|s should accomplish this task. However, Scribble instead drops the first bullet from each iteration, producing this list:
\begin{itemize}
    \item B
    \item D
\end{itemize}
\end{col}
\begin{col}
\flushright
\begin{minipage}{0.95\linewidth}
\begin{minted}{racket}
@(define pairs 
  (list (list "A" "B") (list "C" "D")))
@itemlist{
  @for/list[([p pairs])]{
    @item{@(car p)}
    @item{@(cadr p)}
  }
}
\end{minted}   
\end{minipage}
\end{col}
\end{twocol}

\vspace{0.25em}
\noindent The cause of the bug is more apparent in this expression within the desugared Racket code:
\begin{minted}{racket}
(for/list ([(p pairs)])
  (item (car p)) '"\n" (item (cadr p)))
\end{minted}
The issue is that \verb|for/list| permits a ``body'' of s-expressions, where only the final s-expression becomes the value for each iteration. Scribble's @-expression desugaring directly ``pastes'' the sequence of template elements into the \verb|for/list| body, causing most of the template elements to be dropped. Notably, Racket's \verb|web-server| library uses Scribble's @-expressions, and its documentation cites this bug as a common ``gotcha'' for users\,\cite[\S 7.3.2]{racket-web-server}. The key takeaway: the desugaring of templates to terms requires careful scrutiny to understand how it composes with other language features.

\section{The Document Calculus}
\label{sec:doc-calc}

\Cref{sec:the-bad-parts} shows that the complexity of a document language lies in more than its syntax: its semantics affect how well parts of the language compose together. However, a challenge in conceptualizing document language semantics is that there exists no formal foundations for describing how a document language works. Our work aims to establish such a foundation by designing a \emph{document calculus}, or a formal semantics for the core computational aspects of document languages.

First, we will establish a scope by asking: what is a document, and what is a document language? Within this paper, we consider a document to be ``structured prose,'' that is, plain text optionally augmented with styles (e.g., bold or italics) or hierarchy (e.g., paragraphs or sections) and interspersed with figures (e.g., images or tables). This definition includes objects like academic papers and news articles, and it excludes objects like source code, spreadsheets, and computational notebooks. 
It is useful to restrict the scope of documents because (a) many languages are often used to generate objects in the former set, and (b) those languages have commonalities which have not yet been carefully scrutinized via the lens of PL formalism, unlike e.g. spreadsheets\,\cite{DBLP:journals/vlc/BockBSTT20}. We will give a formal definition of structured prose in the ensuing sections.

\newcommand{\usemark}[1]{\textit{#1}}
\begin{table}[t]
    \centering
\setcellgapes{2pt}
\makegapedcells
\begin{tabular}{c|p{1.3cm}|p{0.9cm}|p{3.5cm}|p{5cm}}
     \textbf{Domain} & \textbf{Ctors} & \textbf{Model} & \textbf{Example Languages} & \textbf{Example Syntax} \\\hline
     \multirow{4}{*}{\hyperref[sec:string-calc]{String}} & \hyperref[sec:string-lit]{Literal} & 
         $\dstrid$ & 
         {\small Text files, \usemark{quoted strings}} & 
         \mintinline{js}|"Hello World"| \\\cline{2-5}
     & \hyperref[sec:string-prog]{Program} & 
        $\dstrprog$ &
        {\small PLs with string APIs, such as \usemark{Javascript}} & 
        \mintinline{js}|"Hello" + " World"| \\ \cline{2-5}
     & \hyperref[sec:string-tmpl-lit]{Template Literal} & 
        $\dstrtmplstr$ &
        {\small C \verb|printf|, Python f-strings, \usemark{Javascript template literals}, Perl interpolated strings} & 
        \vspace{-0.5em}\begin{minipage}{5cm}
\begin{minted}[linenos=false,xleftmargin=0pt]{js}
let world = "World";
`Hello ${world}`
\end{minted}
\end{minipage}  \\\cline{2-5}
     & \hyperref[sec:string-tmpl-prog]{Template Program} & 
        $\dstrtmpllang$ &
        {\small C preprocessor, PHP, LaTeX, \usemark{Jinja} (Python), Liquid (Ruby), Handlebars (Js)} & 
        \vspace{-0.5em}\begin{minipage}{5cm}
\begin{minted}[linenos=false,xleftmargin=0pt]{jinja}
{% set world = "World" %}
Hello {{ world }}
\end{minted}
\end{minipage} \\ \hline
     \multirow{4}{*}{\hyperref[sec:article-calc]{Article}} & \hyperref[sec:article-lit]{Literal} & 
        $\dartid$ &
        {\small \usemark{CommonMark Markdown}, Pandoc Markdown, HTML, XML} & 
        \mintinline{md}|- Hello **World**| \\\cline{2-5}
     & \hyperref[sec:article-prog]{Program} & 
        $\dartprog$ &
        {\small PLs with document APIs, such as \usemark{Javascript}} & 
        \vspace{-0.5em}\begin{minipage}{5cm}
\begin{minted}[linenos=false,xleftmargin=0pt]{js}
var ul = 
  document.createElement("ul");
// ...
\end{minted}
\end{minipage}\vspace{0.3em} \\ \cline{2-5}
     & \hyperref[sec:article-tmpl-lit]{Template Literal} & 
        $\darttmpllit$ &
        {\small JSX Javascript, Scala 2, VB.NET, \usemark{Scribble Racket}, MDX Markdown, Lisp quasiquotes} &
        \vspace{-0.5em}\begin{minipage}{5cm}
\begin{minted}[linenos=false,xleftmargin=0pt]{racket}
@(define world "World")
@itemlist{@item{
  Hello @bold{@world}}}
\end{minted}
\end{minipage} \\ \cline{2-5}
    & \hyperref[sec:article-tmpl-prog]{Template Program} & 
        $\darttmplprog$ &
        {\small \usemark{Typst}, Razor C\#, Svelte Javascript, Markdoc Markdown} &
        \vspace{-0.5em}\begin{minipage}{5cm}
\begin{lstlisting}[language=typst,numbers=none,xleftmargin=0pt]
#let world = [World]
- Hello *#world*
\end{lstlisting}
\end{minipage} \\\hline
    \end{tabular}
    \caption{Taxonomy of \dl{}s based on the document calculus. A level is given as a particular pair document domain and document constructors. Each level corresponds to a specific family of existing \dl{}s, with hyperlinks to the corresponding section of the paper. Example real-world languages in the family are provided along with a sample syntax from one of those languages (indicated with italics).}
    \label{tab:taxonomy}
\end{table}

\newcommand{\kdomain}{\mathcal{M}}
\newcommand{\kconstructor}{\mathcal{C}}
\newcommand{\klevel}{\mathcal{L}}

A document language, then, is a programming language that is commonly used to generate documents. Some document languages are specially designed for documents, such as Markdown, while others are general-purpose but commonly used to generate documents, such as PHP. In reviewing the space of existing document languages, our key insight is that the design space can be decomposed along two dimensions:
\begin{itemize}
    \item \textbf{Document domain}: the type of document generated by a document language. There are two main document domains: plain strings, and annotated trees of strings (which we call ``articles''). Formally, we write this as:
    $$
    \msf{Domain}~\kdomain ::= \str \mid \karticle
    $$
    
    \item \textbf{Document constructors}: the expressiveness of the operations for constructing a value in the document domain. There are four categories of expressiveness: literals (no computation), programs (computation over literals), templates literals (literals with interpolation), and template programs (literals with loops and variables). Formally, we write this as:
    $$
    \msf{Constructor}~\kconstructor ::= \lit \mid \prog \mid \tlit \mid \tprog
    $$
\end{itemize}

A language level is defined as an element of the cross product $\msf{Domain} \times \msf{Constructor}$. This taxonomy is useful because each level corresponds to multiple widely-used document languages, as shown in \Cref{tab:taxonomy}. This taxonomy also provides a natural progression for the development of the document calculus: starting with string literals, we can add successively more features until reaching article template programs. 

The document calculus therefore consists of $2 \times 4 = 8$ levels. Each level of the document calculus is written as $\dlm{\kdomain}{\kconstructor}$, which consists of a document domain $\kdomain$, document expressions  $\expr^\kdomain_\kconstructor$ for the domain $\kdomain$ with constructors $\kconstructor$, and a semantics that relates the two. This section will present each level by first giving examples of real-world document languages at that level, and then providing a formal definition for the level. We also provide an OCaml implementation of these semantics in the supplementary materials, using open functions\,\cite{loh2006open} to match the incremental presentation of the semantics.

As with any model, the document calculus focuses on some aspects to the exclusion of others. For instance, concrete syntax is an essential aspect of any document language. However, our focus is on the computational aspects, so we postpone discussion of syntax to \Cref{sec:concrete-syntax}. As another example, we only model computation as interpolation, binding, conditionals, and loops. We believe these constructs capture the core commonalities amongst the languages in \Cref{tab:taxonomy}. But this decision inevitably omits various features we consider ancillary to the document language design space. For instance, template DSLs like Liquid and Handlebars provide a special facility for accessing the index of a loop iteration, but we do not include that feature in the document calculus.

\subsection{The String Calculus}
\label{sec:string-calc}

A string $s \in \str$ is a sequence of characters $c \in \chr$, such as ``x'' or ``$\Gamma$'' or ``\includegraphics[height=1em]{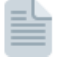}''. We will present a sequence of document calculi $\dlm{\str}{\bullet}$ in the string domain.

\subsubsection{String Literals} 
\label{sec:string-lit}

The first level is the string literal calculus $\dstrid$. For example, text files (left) and string literals (right) are both examples of document languages at this level:

\begin{twocol}
    \begin{col}
        \begin{minted}{text}
I'm suspicious of "strings".        
        \end{minted}
    \end{col}
    \begin{col}
\begin{minted}{python}
"I'm suspicious of \"strings\"."
\end{minted}
    \end{col}
\end{twocol}

\noindent Formally, $\dstrid$ has no computation and therefore the simplest semantics:
\begin{align*}
    \expr^\str_\lit ~ e &::= s \\
    \ty^\str_\lit ~ \tau &::= \str
\end{align*}

\subsubsection{String Programs}
\label{sec:string-prog}

The next level is the string program calculus $\dstrprog$, which supports both string-specific operations (like concatenation) and domain-general operations (like variable binding). This level models general-purpose programming languages with support for strings, such as this Javascript program (left) and Q program (right):

\begin{twocol}
\begin{col}
  \begin{minted}{js}
// Javascript 
let x = "a";
x + "b" + x
\end{minted}  
\end{col}
\begin{col}
\begin{minted}{apl}
/ Q
x: "a"
x, "b", x        
\end{minted}
\end{col}
\end{twocol}

\newcommand{\inject}[3]{\msfb{inject} ~ {#1} ~ \msfb{at} ~ {#2} ~ \msfb{as} ~ {#3}}
\newcommand{\tysum}[1]{\langle{#1}\rangle}
\newcommand{\typrod}[1]{\{{#1}\}}
\newcommand{\recsum}[3]{{#1}_{#2} ~ {#3}}

\noindent Formally, $\dstrprog$ is System F with a base type of strings and a few features relevant for later levels. Namely, string concatenation, fixpoints, sums, products, recursive types, and existential types:
\begin{align*}
    \mathclap{\hspace{25em} \msf{Variable} ~ x \hspace{1.5em} \msf{Type ~ Variable} ~ \alpha \hspace{1.5em}    \msf{Label} ~ \ell} \\
    \expr^\str_\prog ~ e ::=~ &e^\str_\lit \mid \concat{e_1}{e_2} \mid \lambda (x : \tau).~ e \mid e_1 ~ e_2 \mid x \mid \msfb{fix}(x : \tau).~ e \mid \letexp{x}{e_1}{e_2} \mid \\
    & \{(\ell: e_\ell)^*\} \mid e.\ell \mid \inject{e}{\ell}{\tau} \mid \msfb{case} ~ e ~ \{(\ell(x) \Rightarrow e_\ell)^*\} \\
    & \msfb{fold}_\tau ~ e \mid \msfb{unfold}_\tau ~ e \mid \Lambda \alpha. ~ e \mid e[\tau] \mid \msfb{pack} ~ e ~ \msfb{as} ~ \exists \alpha.\tau \mid \msfb{unpack} ~ (x, \alpha) = e_1 ~ \msfb{in} ~ e_2 \\
    \ty^\str_\prog \tau ::=~ &\tau^\str_\lit \mid \tau_1 \rightarrow \tau_2 \mid \{(\ell: \tau_\ell)^*\} \mid \tysum{(\ell: \tau_\ell)^*} \mid \forall \alpha. \tau \mid \mu \alpha. \tau  \mid \exists \alpha. \tau  \mid \alpha\\
\end{align*}

\noindent The static and dynamic semantics of all the features are standard, so we provide them in \appref{sec:additional-rules} for reference. But as a simple example, the ``aba'' program can be written as as follows:
$$\letexp{x}{\quot{a}}{\concat{\concat{x}{\quot{b}}}{x}} \bigstepto \quot{aba}$$ 

\noindent Note that not all of these features are essential for dealing with strings, e.g., recursive types will only be useful for representing tree documents. But rather than scattering this part of the language definition throughout the levels, we opted to introduce all the relevant System F features here. This enables the development of later levels to focus more on the purely document-related features.

In the remainder of the paper, we will refer to an assumed standard library of common types and operations containing the following:
\begin{align*}
  \tylist{\tau} &\triangleq \mu \alpha.~ \tysum{\msf{nil}: () \mid \msf{cons}: \typrod{\msf{hd}: \tau,~ \msf{tail}: \alpha}} \\
  \recsum{\ell}{\mu \alpha. \tau}{e} &\triangleq \msfb{fold}_{\mu \alpha.\tau} ~ \inject{e}{\ell}{\tau[\alpha \rightarrow \mu \alpha. \tau]} \\
  \msf{map} &: \forall \alpha, \beta. ~ (\alpha \rightarrow \beta) \rightarrow \alpha~\msf{list} \rightarrow \beta~\msf{list} \\
  \msf{flatten} &: \forall \alpha.~ \alpha~\msf{list}~\msf{list} \rightarrow \alpha~\msf{list} \\
  \msf{append} &: \forall \alpha.~ \alpha~\msf{list} \rightarrow \alpha~\msf{list} \rightarrow \alpha~\msf{list} \\
  \msf{join} &: \tylist{\str} \rightarrow \str
\end{align*}

\subsubsection{String Template Literals}
\label{sec:string-tmpl-lit}

The next level is the string template literal calculus $\dstrtmplstr$. For example, the ``aba'' program can be written in Javascript (left) and Python (right) programs using string template literals:

\begin{twocol}
\begin{col}
    \begin{minted}{js}
// Javascript
let x = "a";
`${x}b${x}`
    \end{minted}
\end{col}
\begin{col}
    \begin{minted}{python}
# Python
x = "a"
f"{x}b{x}"
    \end{minted}
\end{col}
\end{twocol}

\noindent
String template literals are variously called ``template strings'', ``format strings'', and ``interpolated strings.'' We specifically use the term ``string template \emph{literals}'' to draw a distinction with ``string template \emph{programs}''. Template literals only support positional interpolation of expressions, while template programs support additional template-level features such as variable-bindings and loops. This distinction is useful because both template literals and template programs can be found in real-world systems. 

Formally, templates in $\dstrtmplstr$ are a list of template parts which can be either literals (strings) or interpolated expressions. Within an expression, a template is invoked with the $\msfb{strtpl}$ operator:
\begin{align*}
    \kttext^\str_\tlit ~ t &::= [p^*] \\
    \ktinline^\str_\tlit ~ p &::= s \mid e \\
    \expr^\str_\tlit ~ e &::= e^\str_\prog \mid \strtpl{t}
\end{align*}
For example, the expression \mintinline{Js}|`${x}b${x}`| would parse into the abstract syntax $\strtpl{\dtext{[x, \quot{b}, x]}}$.

Templates are fundamentally about providing a concise representation of document programs, as opposed to increasing the expressiveness of the document language (in the sense used by \citet{felleisen1991expressive}). Therefore, we do not provide an operational semantics directly for templates, but rather provide a translation from templates to the underlying calculus. (We provide a static semantics in \Cref{sec:flat-lists}). More precisely, the translation is specified as a family of functions over syntax kinds $\alpha$ with the form $\sug{\cdot}{\alpha} : \alpha \rightarrow \expr$.

Here, we arrive at a critical design question: \emph{how should templates translate to terms?} As described in the Scribble case study in \Cref{sec:scribble-study}, the particular choice of desugaring will influence how well templates compose with other language features. For example, a ``direct'' desugaring for string template literals might look like this:
\begin{align*}
\sug{[p_1, \ldots, p_n]}{\kttext} \,&\overset{?}{=}\, \sug{p_1}{\ktinline} + \ldots + \sug{p_n}{\ktinline} \\
\sug{\strtpl{t}}{\expr} \,&\overset{?}{=}\, \sug{t}{\kttext}
\end{align*}
In this desugaring, a template desugars to a term of type $\str$ produced by the concatenation of desugared template parts, and the $\msfb{strtpl}$ operator is just the identity. This desugaring is, in fact, a perfectly valid semantics for languages only supporting string template literals. For instance, the ECMAScript 2024 specification\,\cite[\S 13.2.8.6]{es2024} describes a roughly comparable evaluation strategy for ECMAScript template literals.

However, this direct desugaring does not easily generalize to higher levels of the document calculus. For instance, if a template part can be a variable binding $\tset{x}{e}$, it is not obvious how to desugar the binding under this general style of semantics. Or if we want to repurpose templates to generate trees of strings rather than just strings, then we want the desugaring to not collapse template parts into a single string too early.

Therefore, the $\dstrtmplstr$ semantics are carefully designed to support later levels. That semantics is given by the following desugaring:
\begin{align*}
    \sug{[]}{\kttext} &= [] \\
    \sug{p :: ps}{\kttext} &= \sug{p}{\ktinline} :: \sug{ps}{\kttext} \\
    \sug{s}{\ktinline} &= s \\
    \sug{e}{\ktinline} &= \sug{e}{\expr} \\
    \sug{\strtpl{\mtext}}{\expr} &= \msf{join} ~ \sug{\mtext}{\kttext}
\end{align*}

This desugaring represents a few key design decisions \textit{vis-\`a-vis} the direct desugaring. First, templates desugar to lists rather than strings. Second, the $\msfb{strtpl}$ operator is now responsible for converting the list to a string by desugaring to a $\msf{join}$. And third, the desugaring of a template is defined inductively, providing the opportunity for later desugarings of template parts to access the tail of the template.

Another possible desugaring for string template literals could follow the example of PHP by desugaring to effectful commands on a global mutable buffer. However, a pure semantics for templates avoids the compositionality issue described in  \Cref{sec:php-case-study} because the output of a template can be, for instance, combined with auxiliary data in a single data structure. Therefore we consider the effectful desugaring an anti-pattern and focus only on pure desugarings.

\subsubsection{String Template Programs}
\label{sec:string-tmpl-prog}

The final level of the document calculus in the string domain is the string template program calculus $\dstrtmpllang$. String template literals reduce the notation required to concatenate strings and expressions. However, complex templates often involve interpolating expressions which contain nested templates, requiring additional content delimiters. For example, compare the string template literal in Javascript (left) with the string template program in Jinja (right): 

\begin{twocol}
    \begin{col}
\begin{minted}{js}
// Javascript
var l = [1, 2, 3]
```
Examples of addition include: 
${
   l.map(n => `* ${n} + 1 = ${n + 1}`)
    .join("\n")
}
```
\end{minted}        
    \end{col}
    \begin{col}
\begin{minted}{jinja}
{# Jinja #}
{% set l = [1, 2, 3] -%}
Examples of addition include:
{% for n in l -%}
* {{ n }} + 1 = {{ n + 1 }}
{% endfor %}
\end{minted}        
    \end{col}
\end{twocol}

String template programs (more often called ``template languages'') offer concision by lifting computations such as binding and looping into the template. Intuitively, the difference is that in a normal program, content is delimited from computation, such as with quotes or backticks. In a template program, computation is delimited from content, such as with \verb|{% percents %}| in Jinja.

Formally, $\dstrtmpllang$ models this concept by adding support for if-expressions, set-statements, and foreach-loops:
\begin{align*}
  \ktinline^\str_\tprog ~ p ::= p^\str_\tlit \mid \tset{x}{e} \mid \tif{e}{t_1}{t_2}  \mid \dforeach{e}{x}{t}
\end{align*}

Set-statements must be desugared in the context of the rest of the template, so their desugaring is defined as special case over the syntactic kind $\kttext$ rather than $\ktinline$:
\begin{align*}
  \sug{(\tset{x}{e}) :: ps}{\kttext} &= \letexp{x}{e}{\sug{ps}{\kttext}}
\end{align*}
Observe here the importance of defining the desugaring of a template \emph{inductively} so as to permit such special cases, as opposed to independently desugaring each template part.

The semantics of $\msfb{if}$ and $\msfb{foreach}$ are definable at the $\dstrtmpllang$ level; however, we will delay introducing them until reaching the article domain (\Cref{sec:article-tmpl-prog}). These template parts contain nested templates and therefore desugar to nested lists, which requires a flattening/splicing mechanism to un-nest. The explanation of these mechanisms will be more enlightening when contrasted against article template programs as well as string template programs.

\subsection{The Article Calculus}
\label{sec:article-calc}

String template programs are the highest level of the document calculus in the domain of strings. Therefore, we can now proceed by enriching the domain with additional structure. The most common form of structured document is an \emph{attributed tagged tree} like this:
\begin{align*}
\knode ~ n &::= \textnode{s} \mid \node{s}{[(s, s)^*]}{n^*}
\end{align*}

For example, the introduction to this paper would be modeled like this:
\begin{align*}
    &\msfb{node} ~ (\quot{section}, [(\quot{id}, \quot{intro})], [ \\    
    & \ind \msfb{node} ~ (\quot{header}, [], [\textnode{\quot{Introduction}}]), \\
    & \ind \msfb{node} ~ (\quot{para}, [], [\textnode{\quot{We live in a golden age of document languages.}}, \ldots] \\
    &])
\end{align*}

Such trees can naturally represent many kinds of documents (e.g., HTML websites, XML data structures). We focus on the subset of tagged trees that represent \emph{articles}: a tree that consists of \emph{block} nodes (e.g., sections, paragraphs) and \emph{inline} nodes (e.g., plain text, bold text). More precisely, we define an article as an attributed tagged tree that adheres to this schema:
\begin{align*}
    \karticle~ a &::= b^* \\
    \kblock~ b &::= \node{\quot{para}}{[]}{\kappa} \mid \node{\quot{section}}{[]}{a} \\
    \ktext~ \kappa &::= \ell^* \\
    \kinline~ \ell &::= \textnode{s} \mid \node{\quot{bold}}{[]}{\kappa}
\end{align*}

\noindent For simplicity, this definition provides a minimal set of elements that are sufficient to model interesting aspects of article languages, to the exclusion of some common elements like italicized text and bulleted lists.  We will introduce additional elements and attributes as needed, such as when discussing references in \Cref{sec:references}.

We will present a sequence of document calculi $\dlm{\karticle}{\bullet}$ in the article domain, examining how the mechanisms previously examined for computing with strings can be lifted onto trees.

\subsubsection{Article Literals}
\label{sec:article-lit}

The lowest level in the article domain is the article literal calculus $\dartid$, analogous to the string literal calculus. This level models document languages like Markdown (left) and HTML (right):

\begin{twocol}
  \begin{col}
  \begin{minted}{markdown}
<!-- Markdown -->

Hello [world]!

[world]: https://example.com
  \end{minted}
  \end{col}
  \begin{col}
  \begin{minted}{html}
<!-- HTML -->
<p>
  Hello 
  <a href="https://example.com">world</a>!
</p>
  \end{minted}
  \end{col}
\end{twocol}
  
Formally, $\dartid$ just consists of article literals $a$:
$$\expr^\karticle_\lit ::= a$$ 

Note that even languages like Markdown are not fully literal. As in the example above, the feature of named URLs requires interpretation to resolve named references to URL definitions. We will discuss how to model this particular aspect further in \Cref{sec:references}.

\subsubsection{Article Programs}
\label{sec:article-prog}

The next level is the article program calculus $\dartprog$ of programs that construct articles through libraries in the base language. Article APIs usually look either like imperative widget trees as in Javascript (left), or functional combinators over lists as in Elm (right):

\begin{twocol}
  \begin{col}
    \begin{minted}{javascript}
// Javascript
let p = document.createElement("p");
let hello = 
  document.createTextNode("Hello ");
let world = 
  document.createElement("strong");
world.textContent = "world";
p.appendChild(hello);
p.appendChild(world);
    \end{minted}
  \end{col}
  \begin{col}
  \begin{minted}{elm}
-- Elm  
import Html exposing (text, p, strong)
main = p [] 
  [ text "Hello "
  , strong [] [ text "world"] 
  ]
\end{minted}
  \end{col}
\end{twocol}

\noindent Formally, $\dartprog$ is the combination of $\dstrprog$ and $\dartid$:
\begin{align*}
    \expr^\karticle_\prog ~ e &::= e^\str_\prog \mid e^\karticle_\lit
\end{align*}

\newcommand{\tynode}{\msf{NodeTy}}
\newcommand{\tystructnode}{\msf{struct\text{-}node}}

\noindent We model articles in $\dartprog$ as a sum of text nodes and structure nodes:
\begin{align*}
    \tau~\tystructnode &\triangleq \typrod{\msf{name}: \str,~  \msf{attrs}: \tylist{(\str \times \str)},~  \msf{children}: \tau} \\
    \tynode &\triangleq \mu \alpha.~ \tysum{\msf{text}: \str \mid \msf{node}: \tylist{\alpha}~\tystructnode} 
\end{align*}

Note that the type system of $\dartprog$ is expressive enough to evaluate if $\tc{\cdot}{e} {\tylist{\tynode}}$ for some document program $e$. However, the type system is not expressive enough to determine that $e$ is actually an article, e.g., that there are no block nodes nested in an inline node. This limitation is present in all existing document languages, except article literal languages like Markdown where only the article subset of trees is syntactically expressible.

\subsubsection{Article Template Literals}
\label{sec:article-tmpl-lit}

The next level is the article template literal calculus $\darttmpllit$. Article template literals are analogous to string template literals --- they are a pithy form of document constructor with support for expression interpolation, but they evaluate to articles rather than strings. The most common form of article template literal is either HTML syntax as in JSX Javascript (left), or XML syntax as in Scala 2 (also left) and Visual Basic .NET (right):

\begin{twocol}
  \begin{col}
    \begin{minted}{jsxlexer.py:JsxLexer -x}
// JSX Javascript and Scala 2
var items = 
  Array("Milk", "Eggs", "Cheese");
<article>
  <p>Today I am going shopping for:</p>
  <ul>
    {items.map(item => 
      <li><p>{item}</p></li>)}
  </ul>
</article>
    \end{minted}
  \end{col}
  \begin{col}
\newcommand{\mtag}[1]{\textcolor{mintedgreen}{\textbf{#1}}}
\newcommand{\mdelim}[1]{\textcolor{mintedyellow}{#1}}
\begin{minted}[escapeinside=||]{vbnet}
' VB.NET
Dim items() = 
  {"Milk", "Eggs", "Cheese"}
|<\mtag{article}>
  <\mtag{p}>Today I am going shopping for:</\mtag{p}>
  <\mtag{ul}>
    \mdelim{<%=} \mtag{From} item \mtag{in} items 
        \mtag{Select} <\mtag{li}>
          <\mtag{p}>\mdelim{<%=} item \mdelim{%>}</\mtag{p}>
        </\mtag{li}> \mdelim{%>}
  </\mtag{ul}>
</\mtag{article}>|
\end{minted}
  \end{col}
\end{twocol}

\noindent Observe that these syntaxes represent templates in part because all text inside the tags is undelimited, in contrast to the examples in \Cref{sec:article-prog}.

\newcommand{\treetplsup}{\msfb{\tiny treetpl}}

\newcommand{\mktext}[1]{\recsum{\msf{text}}{\tynode}{#1}}
\newcommand{\mknode}[1]{\recsum{\msf{node}}{\tynode}{#1}}

Formally, the semantics of $\darttmpllit$ are quite simple given our careful setup from earlier. In \Cref{sec:string-tmpl-lit} we defined $\kttext$ and $\ktinline$ to model string templates in $\dstrtmplstr$. We will reuse those features, adding a new template part for tree nodes, and adding a new template expression for articles:
\begin{align*}
    \expr^\karticle_\tlit~ e &::= e^\karticle_\prog \mid e^\str_\tlit \mid \treetpl{t} \\
    \ktinline^\karticle_\tlit~ p &::= p^\str_\tlit \mid \node{s}{[(s, e)^*]}{t}
\end{align*}
\begin{align*}
    \sug{\treetpl{t}}{\expr} &= \sugwith{t}{\kttext}{\tynode} \\
    \sugwith{s}{\ktinline}{\tynode} &= \mktext{s} \\
    \sugwith{\node{s}{at}{t}}{\ktinline}{\tynode} &= \mknode{\typrod{\msf{name}: s,~ \msf{attrs}: [(s, \sug{e}{\expr})^*],~ \msf{children}: \sug{t}{\kttext}}}
\end{align*}
 
\noindent For example, the following expression evaluates to the article containing a single paragraph with the text ``Hello \textbf{World}'':
$$
\letexp{x}{\quot{World}}{\treetpl{[\node{\quot{para}}{[]}{[\quot{Hello}, \node{\quot{bold}}{[]}{[x]}]}}}
$$

The key idea is that $\strtpl{t}$ should desugar to an expression of type $\str$, which in turn relies on $\sugwith{t}{\kttext}{\str}$ to desugar to an expression of type $\tylist{\str}$. For tree templates, we want $\treetpl{t}$ to desugar to an expression of type $\tylist{\tynode}$, which in turn relies on $\sugwith{t}{\kttext}{\tynode}$ to desugar to an expression of type $\tylist{\tynode}$. Therefore, there are two key differences in the desugarings of $\msfb{strtpl}$ and $\msfb{treetpl}$:
\begin{itemize}
    \item The desugaring of $\msfb{strtpl}$ wraps the template in a $\msf{join}$, while the desugaring of $\msfb{treetpl}$ does not.
    \item The desugaring of $\msfb{strtpl}$ has string literal template parts desugar to terms of type $\str$, while those same parts desugar to terms of type $\tynode$ inside a $\msfb{treetpl}$.
\end{itemize}

To implement the latter detail, we modify the desugaring function to be context-aware, notated with the superscript $\sugwith{\cdot}{}{\tau}$, where a template should desugar to a list containing elements of type $\tau$. The $\msfb{treetpl}$ desugaring enters the $\tynode$ context, which is assumed to be carried through where not explicitly written out. The $\msfb{strtpl}$ desugaring similar enters the $\str$ context, where string literals are desugared according to the rule:
$$
\sugwith{s}{\ktinline}{\str} = s
$$

\subsubsection{Article Template Programs}
\label{sec:article-tmpl-prog}

The final level is the article template program calculus $\darttmplprog$ of article templates with if-expressions, set-statements and foreach-loops. Examples include Typst (left) and Svelte Javascript (right):

\begin{twocol}
  \begin{col}
\begin{lstlisting}[language=typst,escapeinside=||]
// Typst
#let items = ("Milk", "Eggs", "Cheese")

Today I am going shopping |for|:

#for item in items [
  - #item
]
\end{lstlisting}
    \end{col}
\begin{col}
\begin{minted}[escapeinside=||]{jsxlexer.py:JsxLexer -x}
// Svelte Javascript
<script>
  let items = ["Milk", "Eggs", "Cheese"];
</script>
<article>
  <p>Today I am going shopping for:</p>
  <ul>
    |\{\#each items as item\}|
      <li><p>{item}</p></li>
    |\{/each\}|
  </ul>
</article>
\end{minted}
  \end{col}
\end{twocol}

Formally, $\darttmplprog$ requires no additional features, and can be generated by composing the last level of the string calculus with the previous level of the article calculus:
\begin{align*}
    \expr^\karticle_\tprog~ e ::= e^\karticle_\tlit \mid e^\str_\tprog
\end{align*}

For example, the shopping list program can be expressed as the following template (imagining the article domain is enriched with bulleted lists and list items):
\begin{align*}
&\msfb{treetpl}~ [ \\
&\ind \tset{\msf{items}}{[\quot{Milk}, \quot{Eggs}, \quot{Cheese}]}, \\
&\ind \node{\quot{para}}{[]}{[\textnode{\quot{Today I am going shopping for}]}}, \\
&\ind \node{\quot{list}}{[]}{
    [\dforeach{\msf{items}}{\msf{item}}{
        [\node{\quot{item}}{[]}{[\msf{item}]}]}]} \\
&]
\end{align*}

Now we have sufficient points of reference to return to the question first raised in \Cref{sec:string-tmpl-prog}: what are the semantics of $\msfb{foreach}$ and $\msfb{if}$? The crux of the problem is that these constructs contain nested templates, which affects the \emph{dimensionality} of the term desugared from a template. To explain, consider a simple desugaring of $\msfb{foreach}$ into a $\msf{map}$:
\begin{align*}
  \sug{\dforeach{e}{x}{t}}{\ktinline} &\overset{?}{=} \msf{map} ~ (\lambda x. ~ \sug{t}{\kttext}) ~ \sug{e}{\expr}
\end{align*}

\noindent Then consider the behavior of the example tree template under the simple desugaring. In particular, observe this part:
\begin{align*}
&\sug{\node{\quot{list}}{[]}{[\dforeach{\msf{items}}{\msf{item}}{
        [\node{\quot{item}}{[]}{[\msf{item}]}]}]}}{\ktinline} \\
        =~ &\node{\quot{list}}{[]}{[\msf{map} ~ (\lambda \msf{item}. ~ [\node{\quot{item}}{[]}{[\msf{item}]}]) ~ \msf{items}]} \\
        \bigsteptoop~ &\node{\quot{list}}{[]}{[[[\node{\quot{item}}{[]}{[\quot{Milk}]}], \ldots]]}
\end{align*}
Note that the child list of the \quot{list} node is 3-dimensional! That certainly does not match the article schema provided at the top of \Cref{sec:article-calc}. Any document language with a $\msfb{foreach}$-loop must somehow flatten the node list to one dimension; the key design question is where in the pipeline this should happen. A few different approaches can be found in existing languages:

\paragraph{Avoid nested node lists with an imperative template semantics.} 
For example, Svelte will translate the shopping list program into 132 lines of Javascript. Part of that translation is a function that constructs the DOM in an imperative manner, like this:
    \begin{minted}{javascript}
insert(target, article, anchor);
append(article, p);
append(article, t1);
append(article, ul);

for (let i = 0; i < each_blocks.length; i += 1) {
    each_blocks[i].m(ul, null);
}
    \end{minted}

    While this translation avoids the issue of list dimensionality, the use of imperative template semantics can cause issues as described for PHP in \Cref{sec:php-case-study}. In the case of Svelte, one limitation is that templates cannot be nested inside expressions. For instance, this program is not valid Svelte:
    \begin{minted}{jsxlexer.py:JsxLexer -x}
<ul>
  {items.map(item => <li><p>{item}</p></li>)}
</ul>
\end{minted}

\paragraph{Avoid nested node lists with unquote-splicing.} Quasiquotes are a kind of template language where the unquote-splicing operator can be used to reduce list dimensionality. For example, the shopping list could be written in Clojure with the Hiccup library\,\citeurl{https://weavejester.github.io/hiccup/} like this (noting the \verb|~@| unquote-splice):
\begin{minted}{clojure}
(def items ["Eggs", "Milk", "Cheese"])
(def item-lis (map #(h/html [:li [:p %]]) items))
(eval `(h/html [:ul ~@item-lis]))
\end{minted}

\noindent In $\darttmplprog$, this strategy is modeled by introducing unquote-splicing via a $\msfb{splice}$ template part:   
$$
\ktinline^\karticle_\tprog ~ p ::=~ p^\karticle_\tlit \mid p^\str_\tprog \mid \splice{e}
$$
\begin{align*}
  \sug{(\splice{e}) :: ps}{\kttext} =~ \msf{append}~\sug{e}{\expr}~\sug{ps}{\kttext}
\end{align*}
Like a set-statement, a $\msfb{splice}$ desugars in context as an append of the spliced head expression to the tail of the desugared template. Then $\msfb{foreach}$ and $\msfb{if}$ can be desugared into splices:
\begin{align*}
  &\sugwith{(\dforeach{e}{x}{\mtext}) :: ps}{\kttext}{\tynode} \\
  &= \sug{(\splice{\msf{flatten} ~ (\msf{map} ~ (\lambda x.~ \sug{\mtext}{\kttext}) ~ \sug{e}{\expr}})) :: ps}{\kttext} \\
  &\sugwith{(\tif{e}{t_1}{t_2}) :: ps}{\kttext}{\tynode} \\ 
  &= \sug{(\splice{\tif{\sug{e}{\expr}}{\sug{t_1}{\kttext}}{\sug{t_2}{\kttext}}}) :: ps}{\kttext}
\end{align*}

Because the desugaring generates another template (without $\msfb{foreach}$ or $\msfb{if}$ but with $\msfb{splice}$), the desugaring function must be recursively invoked. We prove that this desugaring produces well-typed terms (with well-typed inputs) in \Cref{sec:flat-lists}.

\paragraph{Permit nested node lists as a document IR, and flatten the IR later.} 
For example, document systems like Scribble, Typst, React, and ScalaTags\,\citeurlwith{https://com-lihaoyi.github.io/scalatags/}{com-lihaoyi.github.io/scalatags} provide a document IR that permits arbitrary levels of nesting. After a document program is interpreted to a value in the IR, a visitor sweeps through each node and recursively flattens all node lists.  

In $\darttmplprog$, this strategy can be modeled by introducing the concept of a \emph{fragment} as an arbitrarily nested list of document content:
\newcommand{\tyfrag}[1]{{#1}~\msf{fragment}}
\newcommand{\kfrag}{\msf{FNode}}
\newcommand{\tynodefrag}{\msf{NodeFrag}}
\begin{align*}
    \tyfrag{\tau} &\triangleq \mu \alpha. ~ \tysum{\msf{base}: \tau \mid \msf{children}: \tylist{\alpha}} \\
    \kfrag &\triangleq \mu \alpha.~ \tysum{\msf{text}: \str \mid \msf{node}: \tyfrag{\alpha} ~ \tystructnode} \\
    \tynodefrag &\triangleq \tyfrag{\kfrag}
\end{align*}

\newcommand{\fragtpl}[1]{\msfb{fragtpl}~{#1}}
\newcommand{\elimfrags}{\msf{elim\text{-}frags}}

\noindent The desugaring for $\msfb{foreach}$ and $\msfb{if}$ can then follow the ``simple'' desugaring described earlier, with some additional constructors to build terms of the appropriate type:
$$
\expr^\karticle_\tprog e ::= \ldots \mid \fragtpl{t}
$$
\begin{align*}
     &\sug{\fragtpl{t}}{\expr} &&= \elimfrags ~ (\recsum{\msf{children}}{\tynodefrag}{\sugwith{t}{\kttext}{\tynodefrag}}) \\
     &\sugwith{s}{\ktinline}{\tynodefrag} &&= \recsum{\msf{base}}{\tynodefrag}{(\recsum{\msf{text}}{\kfrag}{s}}) \\
     &\sugwith{\node{s}{at}{t}}{\ktinline}{\tynodefrag} &&= \recsum{\msf{base}}{\tynodefrag}{(\recsum{\msf{node}}{\kfrag}{(s, at, \sug{t}{\kttext})})} \\
     &\sugwith{\dforeach{e}{x}{t}}{\ktinline}{\tynodefrag} &&= \recsum{\msf{children}}{\tynodefrag}{(\msf{map} ~ (\lambda x. ~ \recsum{\msf{children}}{\tynodefrag}{\sug{t}{\kttext}}) ~ e)} \\
     &\sugwith{\tif{e}{t_1}{t_2}}{\ktinline}{\tynodefrag} &&= \recsum{\msf{children}}{\tynodefrag}{(\tif{\sug{e}{\expr}}{\sug{t_1}{\kttext}}{\sug{t_2}{\kttext}})}
\end{align*}

\noindent It is worth noting that these semantics would be simpler in a dynamically-typed language, as a nested list could be expressed with the standard list type rather than a bespoke fragment type, and all the fragment constructors would be obviated.

Under this semantics, a template $t$ should desugar to a term of type $\tynodefrag$. To flatten the fragment, we introduce a function $\elimfrags : \tynodefrag \rightarrow \tylist{\tynode}$:
\begin{align*}
    &\elimfrags ~ (\recsum{\msf{base}}{\tynodefrag}{(\recsum{\msf{text}}{\kfrag}{s})}) &&= [\mktext{s}] \\
    &\elimfrags ~ (\recsum{\msf{base}}{\tynodefrag}{(\recsum{\msf{node}}{\kfrag}{(s, at, f)})}) &&= [\mknode{(s, at, \elimfrags ~ f)}] \\
    &\elimfrags ~ (\recsum{\msf{children}}{\tynodefrag}{l}) &&= \msf{flatten} ~ (\msf{map} ~ \elimfrags ~ l)
\end{align*}

Both the splicing and fragment strategies are reasonable ways to deal with the template dimensionality problem in a functional manner. It is ultimately a matter of taste as to which should be used in practice. The unquote-splicing strategy seems intuitively cleaner from the perspective of the language designer (no messy intermediate representation), although the fragment strategy seems nicer from the perspective of the document author (no worrying about getting just the right combination of splices).

Note as well that any of these strategies will avoid Scribble's dropping-elements issue described in \Cref{sec:scribble-study}. The key idea is that templates desugar to lists, and in System F a list of strings or nodes cannot be mistaken for a sequence of expressions (unlike in Racket with its ``implicit \verb|begin|''). An equivalent $\darttmplprog{}$ program to the one in \Cref{sec:scribble-study} would produce the expected output.

\section{Extending the Document Calculus}
\label{sec:article-extensions}

\Cref{sec:doc-calc} described the semantics of the document calculus, arguing that it models the features of popular document languages. The next two sections demonstrate how the document calculus can provide a foundation for describing higher-level document features and for reasoning about document programs. In this section, we extend the document calculus with three interesting document features: references, reforestation, and reactivity. Each feature requires a non-trivial change to the language's semantics --- references require staged computation (\Cref{sec:references}), reforestation requires a global analysis of document structure (\Cref{sec:reforesting}), and reactivity requires a complex runtime (\Cref{sec:reactivity}). We also provide an OCaml implementation of each feature in the supplemental materials (note that these implementations are shallowly embedded so as to avoid the verbosity of System F).

\subsection{References}
\label{sec:references}

A common feature in \dl{}s is to support identifiers on nodes that can be referred to elsewhere in a document. Specifically, we consider an extension to the article schema where sections can have string identifiers, and a ref element can refer to a section:
\newcommand{\attrnode}[3]{\msfb{node}~({#1}, {#2}, {#3})}
\begin{align*}
\kblock~b &::= \ldots \mid \attrnode{\quot{section}}{[(\quot{id}, s)]}{a} \\
\kinline~\ell &::= \ldots \mid \attrnode{\quot{ref}}{[(\quot{target}, s)]}{[]}
\end{align*}

The intended semantics are comparable to \TeX's, i.e., the displayed content of a section reference should be the number of the referenced section. This feature brings two challenges: checking for invalid references, and computing the content of a reference.

\subsubsection{Reference Validity}
\label{sec:validity}

As alluded to in \Cref{sec:article-prog}, there are multiple conceptions of validity when thinking about document programs. For example, one form of validity is well-typedness of the input: a document expression $e$ is valid if $\tc{\cdot}{e}{\tylist{\tynode}}$. Another form of validity is well-formedness of the output: a document expression $e$ is valid if $e \bigsteptoop v$ and $v \in \karticle$. In the wild, validity sometimes means parseability: the CommonMark specification\,\cite{cmark-spec} for Markdown states that ``any sequence of characters is a valid CommonMark document''. 

\newcommand{\kidctxt}{\msf{IdCtxt}}
\newcommand{\idctxt}{\Delta}

As the document domain is enriched with additional structure, well-formedness becomes an insufficient criterion for document validity. In the case of references, an article is not valid if it references an unknown identifier, analogous to a free variable. Therefore, we need to model validity via an auxiliary judgment that captures whether a document is valid beyond its syntactic structure.

Formally, we model reference validity first by constructing a identifier context:
$$\kidctxt \triangleq (\str \ast \msf{int}~\msf{list})~\msf{list} \ind\ind \Delta : \kidctxt$$
This context maps identifiers to section numbers. We construct $\idctxt$ via the function:
$$\sectionsatf_\alpha :  \msf{int~list} \rightarrow \alpha \rightarrow \kidctxt \ast \msf{int}~\msf{list}$$
The key case that deals with sections is as follows:
\begin{align*}
&\sectionsatf_\tynode ~ (k :: k^*) ~ (\mknode{(\quot{section}, [(\quot{id}, id), \ldots], children)})  = \\
&\ind \letexp{(\idctxt, \_)}{\sectionsatf_{\tylist{\tynode}} ~ (1 :: k :: ks) ~ children}{} \\
&\ind \letexp{\idctxt'}{(id, k :: ks) :: \idctxt}{} \\
&\ind (\idctxt', (k + 1) :: ks)
\end{align*}
In this case, given a current section numbering $k :: ks$, the section's children are analyzed with a fresh subsection counter placed on the stack. The identifier context is updated with the current section's ID, and the section number is incremented.

Let $\sections{n} = \sectionsat{n, [1]}.0$. Then we can define a validity judgment $\wf{\idctxt}{\cdot}$, where an article $a$ is valid if $\wf{\sections{a}}{a}$. Two representative inference rules are as follows:
\begin{mathpar}
\inferrule{\wf{\idctxt}{b_1} \\ \cdots \\ \wf{\idctxt}{b_n}}{\wf{\idctxt}{\article{\listlit{b_1, \ldots, b_n}}}}

\inferrule
    {s \in \msf{dom}(\idctxt)}
    {\wf{\idctxt}{\mknode{(\quot{ref}, [(\quot{target}, s)], [])}}}
\end{mathpar}

\noindent The full validity judgment is provided in the OCaml implementation.

Representing references in documents has a similar flavor to representing binders in deeply embedded languages\,\cite{licata2009binding,cave2012binding}, and could in theory be addressed with similar techniques. One important difference is that in documents, both identifiers and references can be placed anywhere in the document; referential structure is not strictly hierarchical as with lexically-scoped variables.

\subsubsection{Reference Content}

The validity judgment must notably be expressed in two stages --- one to collect a context of identifiers ($\sections{a}$), and one to check for validity in that context ($\wf{\idctxt}{a}$). Similarly, the content of a reference must be generated in two stages. In \LaTeX, for example, a reference to the next section in this document like \verb|\ref{sec:reforesting}| will be replaced by the text ``\ref{sec:reforesting}''. This operation is non-local, because the document language cannot know the section number of a forward reference at the point of reference. Most document languages accomplish this task with a second pass over the document, such as the \verb|.aux| file generated by LaTeX on a first-pass which is later rendered on a second-pass.

\newcommand{\renderrefsf}{\msf{render\text{-}refs}}
\newcommand{\renderrefs}[1]{\renderrefsf(#1)}
\newcommand{\replacerefsf}{\msf{replace\text{-}refs}}
\newcommand{\replacerefs}[1]{\replacerefsf(#1)}

We model the generation of reference content in the document calculus as follows. Say that $\tc{\cdot}{e}{\tylist{\tynode}}$ and $e \bigsteptoop v$. Then we compute the final document $v' = \renderrefs{v}$ where $\replacerefsf_\alpha : \kidctxt \rightarrow \alpha \rightarrow \alpha$ is a visitor over articles, with the key case as follows:
\begin{align*}
&\replacerefsf_\tynode ~ (\Delta : \kidctxt) ~ (\mknode{(\quot{ref}, [(\quot{target}, id)], []}, \idctxt) : \tynode) =\\ 
&\ind \mktext{(\msf{section\text{-}number\text{-}to\text{-}string} ~ \idctxt[id])} \semsep
&\renderrefsf ~ (d : \tynode) = \replacerefsf ~ (\sectionsf ~ d) ~ d
\end{align*}

More broadly, this extension reflects a key aspects of computing with documents: the validity and content of a document program's output can be a global property of program, which requires commensurate features for non-local computation.

\subsection{Reforestation}
\label{sec:reforesting}




Another instance of non-local computation in documents is \emph{reforestation}. 
In some article \dl{}s, the document structure expressed by the programmer is often quite different from the final generated document structure. For example, languages like Typst, Scribble, and Markdown do not require paragraphs to be explicitly wrapped in a tag like \verb|<p>|; rather, paragraphs are inferred based on line breaks. Another common operation is to permit the programmer to write sections linearly, and then to reconstruct the section hierarchy by grouping content between pairs of headers.

In a \dl{} with reforestation, the user writes a template program which is initially evaluated into a ``raw'' document tree that is not a syntactically-valid article, but where all expressions have been reduced to a value. Then a second pass ``reforests'' the raw document into a syntactically-valid article by analyzing the global document structure of the input. For example, the \verb|decode| function in Scribble\,\cite[p.~113]{flatt2009scribble} implements this functionality.

\newcommand{\flowtpl}[1]{\msfb{flowtpl}~{#1}}
\newcommand{\reforest}[1]{\msfb{reforest}~{#1}}
\newcommand{\reforestf}{\msf{reforest}}
\newcommand{\reforestfn}[2]{\reforestf ~ {#1} ~ {#2}}

To model reforestation in the article calculus, we add a $\msfb{flowtpl}$ primitive for reforested tree templates, which desugars into a tree template wrapped in a call to a $\reforestf$ function:
\begin{align*}
    \expr^\karticle_\tprog e &::= \ldots \mid \flowtpl{t} \semsep
    \sug{\flowtpl{t}}{\expr} &= \msf{reforest} ~ \sug{\treetpl{t}}{\expr} ~ []
\end{align*}

The key detail is the implementation of $\msf{reforest} : \tylist{\tynode} \rightarrow \tylist{\tynode} \rightarrow \tylist{\tynode}$. The specifics vary between languages, but a simple example that we can implement for $\darttmplprog$ will collect inline elements into paragraphs.
The function $\reforestf$ iterates through a list of nodes $n^*$ with an accumulator for the current paragraph $par$. It emits paragraphs upon encountering the end of a list, a double newline (as in Markdown), or a block node. For example, the document on the left would be reforested to the document on the right:

\begin{twocol}
    \begin{minipage}{0.4\textwidth}
\begin{align*}
[&\mktext{\quot{Hello}}, \\
&\mktext{\quot{World}}, \\
&\mktext{\quot{\textbackslash{}n\textbackslash{}n}}, \\
&\mknode{(\quot{figure},~ [],~ [\ldots])}, \\
&\mktext{\quot{Post-figure}}]
\end{align*}
    \end{minipage}
    \begin{col}
\begin{align*}    
[&\msf{node}_\tynode~ (\quot{para},~ [],~ \\ 
&\ind[\mktext{\quot{Hello}},~ \mktext{\quot{World}}]), \\
&\mknode{(\quot{figure}, [],~ [\ldots])}, \\
&\mknode{(\quot{para}, [],~ [\mktext{\quot{Post-figure}}])}]
\end{align*}
    \end{col}
\end{twocol}

\noindent The precise definition of $\reforestf$ is as follows:
\begin{align*}
    \reforestfn{[]}{par} &= [\mknode{(\quot{para}, [], \msf{rev} ~ par)}] \\
    \reforestfn{(\mktext{\quot{\textbackslash n\textbackslash n}}) :: ns}{par} &=    
        (\mknode{(\quot{para}, [], \msf{rev} ~ par)}) :: \reforestfn{ns}{[]} \\
    \reforestfn{((\mktext{s}) :: ns)}{par} &= \reforestfn{ns}{(\mktext{s} :: par)} \\
   \reforestfn{(\mknode{(s, at, ns_c)} :: ns)}{par} &= \begin{cases}
        \mknode{(\quot{para}, [], \msf{rev} ~ par)} :: \\ 
        \hspace{1em}\mknode{(s, at, \reforestfn{ns_c}{[]})} :: \\ 
        \hspace{1em}\reforestfn{ns}{[]} & \text{if}~ \msf{is\text{-}block} ~ s \\[0.3em]
        \reforestfn{ns}{(\mknode{(s, at, cs)} :: par)} & \text{otherwise}
    \end{cases}
\end{align*}

Reforestation again demonstrates how document computation requires analysis of the global structure of a document, such as by accumulating sequential elements into groups. 
The correctness condition for the $\reforestf$ function is that it must generate a valid document that adheres to the $\karticle$ schema. This condition generally assumes that the input $\tylist{\tynode}$ also adheres to some intermediate schema as a precondition. For instance, the implementation above assumes that the input is already valid and does not, say, contain block nodes within inline nodes. A more aggressive implementation could attempt to repair an invalid document by, say, reordering invalid node nests. But in practice, document repair is most often performed during parsing rather than a later stage, as in HTML and Markdown.

\subsection{Reactivity}
\label{sec:reactivity}

Modern documents, especially those in the browser, can be \emph{reactive} to signals such as a timer or user input. Such reactions include animations, interactive widgets, and explorable explanations. Many recently-developed \dl{}s focus on reactivity, as we discuss in \Cref{sec:rw-reactive}. Therefore, it would be valuable to model reactivity within the document calculus. This model enables us to reason about how reactivity interacts with features like references, as we will discuss in \Cref{sec:reactive-refs}.

We model reactivity by blending ideas from two popular UI frameworks. First, we adopt functional reactive programming for UIs as in Elm\,\cite{czaplicki2013elm}, i.e., purely functional state management via signals. FRP is an appropriate paradigm for the document subset of UIs, and its purely functional nature fits well into our purely functional calculus. Second, we adopt UI components as in React, i.e., encapsulating model and view into a single object. Most existing reactive document languages use components (except Elm), so we reflect that fact in the model.

\newcommand{\component}[3]{\{\msf{init}: {#1},~ \msf{update}: {#2},~ \msf{view}: {#3}\}}
\newcommand{\cconstruct}[2]{\msfb{component}~{#1}~{#2}}
\newcommand{\instance}[5]{\{\msf{id}: {#1},~ \msf{component}: {#2},~ \msf{props}: {#3},~ \msf{state}: {#4},~ \msf{children}: {#5}\}}
\newcommand{\cinstance}[1]{\msfb{instance} ~ {#1}}

\newcommand{\einit}{e_\msf{init}}
\newcommand{\estate}{e_\msf{state}}
\newcommand{\eupdate}{e_\msf{update}}
\newcommand{\eview}{e_\msf{view}}
\newcommand{\eprops}{e_\msf{props}}
\newcommand{\echildren}{e_\msf{children}}

\newcommand{\docinit}[1]{\msf{doc\text{-}init}({#1})}
\newcommand{\docstepf}{\msf{doc\text{-}step}}
\newcommand{\docstep}[1]{\docstepf({#1})}
\newcommand{\docviewf}{\msf{doc\text{-}view}}
\newcommand{\docview}[1]{\docviewf({#1})}
\newcommand{\docstar}[1]{\msf{doc}\text{-$\ast$}({#1})}
\newcommand{\cview}[3]{\msf{cview}({#1}, {#2}, {#3})}
\newcommand{\reconcile}[3]{\msf{reconcile} ~ {#1} ~ {#2} ~ {#3}}

\newcommand{\tycomp}[2]{(#1, #2)~\msf{component}}
\newcommand{\tyinst}{\msf{instance}}
\newcommand{\typrops}{\tau_\msf{props}}
\newcommand{\tystate}{\tau_\msf{state}}
\newcommand{\tyreact}{\msf{ReactNode}}
\newcommand{\instid}{\msf{InstId}}
\newcommand{\mkinst}[1]{\recsum{\msf{inst}}{\tyreact}{#1}}

At the core of the reactive model are the types of components, instances, and nodes:

\begin{twocol}
    \begin{col}
\begin{align*}
&\tycomp{\typrops}{\tystate} \triangleq \\ 
    &\ind\{\msf{init} : \typrops \rightarrow \tystate,\\
    &\ind\,\, \msf{update} : \msf{Signal} \times \tystate \rightarrow \tystate, \\
    &\ind\,\, \msf{view} : \tystate \rightarrow \tyreact\}
\end{align*}        
    \end{col}
    \begin{col}
\begin{align*}
    &\tyinst \triangleq \exists \typrops, \tystate. \\
      &\ind\{\msf{id}: \instid, \\ 
      &\ind\,\, \msf{com}: \tycomp{\typrops}{\tystate} \\
      &\ind\,\, \msf{props}: \typrops, \\
      &\ind\,\, \msf{state}: \tystate, \\
      &\ind\,\, \msf{node}: \tyreact \}
\end{align*}        
    \end{col}
\end{twocol}
$$
\tyreact \triangleq \mu \alpha.~ \tysum{\msf{text}: \str \mid \msf{node}: \tylist{\alpha}~\tystructnode \mid \msf{inst}: \tyinst} 
$$

\noindent A component is described by an $\msf{init}$ function that converts properties into an initial state. The $\msf{update}$ function handles a signal (which we just assume $\msf{Signal} = \str$ for simplicity) and returns an updated state. The $\msf{view}$ function returns a reactive node for a given state. 

An instance is a particular reification of a component. It internally maintains a local state, and a child view, along with a unique ID that is generated by a function $\msf{gen\text{-}id} : () \rightarrow \instid$. A component is instantiated with the following function:
\newcommand{\inst}[2]{\msf{instantiate} ~ {#1} ~ {#2}}
\begin{align*}
    &\msf{instantiate} \triangleq \Lambda \typrops, \tystate .~ \lambda(com : \tycomp{\typrops}{\tystate}) .~ \lambda (props : \typrops) .\\
    &\ind \letexp{state}{com.\msf{init} ~ \msf{props}}{} \\
    &\ind \letexp{node}{com.\msf{view} ~ state}{} \\
    &\ind \msfb{pack} ~ \{\msf{id}: \msf{gen\text{-}id}~(),~ com,~ props,~ state,~ node\} ~ \msfb{as} ~ \tyinst
\end{align*}

\noindent Finally, a reactive node $\tyreact$ is a $\tynode$ with an additional case for instances.
\newcommand{\reacttpl}[1]{\msfb{reacttpl} ~ {#1}}
To integrate instances into templates, we add a $\msfb{reacttpl}$ expression and a $\msfb{component}$ template part:
\begin{align*}
    \expr^\karticle_\tprog~ e &::= \ldots \mid \reacttpl{t} \\
    \msf{TPart}~ p &::= \ldots \mid \msfb{component} ~ e_1 ~ e_2 \semsep
    \sug{\reacttpl{t}}{\expr} &= \sugwith{t}{\kttext}{\tyreact} \\
    \sugwith{\msfb{component} ~ e_1 ~ e_2}{\ktinline}{\tyreact} &= \mkinst{(\inst{e_1}{e_2})}
\end{align*}

\noindent For example, the following document uses a counter component that appends to a string every time the component is clicked:
\begin{align*}
    &\msfb{let} ~ \msf{counter} : \tycomp{\str}{\str \times \str} = \{ \\%
        &\ind\msf{init}: \lambda p.~ (p, \quot{}),\\
        &\ind\msf{update}: \lambda e.~ \lambda (p, s).~ \msfb{if} ~ e = \quot{click} ~ \msfb{then} ~ (p, p + s) ~ \msfb{else} ~ (p, s),\\
        &\ind\msf{view}: \lambda (\_, s).~ \mktext{s}\\
    &\} ~ \msfb{in} \\
    &\reacttpl{[\quot{The number of clicks is},~ \cconstruct{\msf{counter}}{\quot{|}}]}
\end{align*}

\noindent To make the document reactive, we must provide it a runtime. The runtime consists of two functions:
\begin{itemize}
    \item $\docstepf : (\instid, \msf{Signal})~\msf{map} \rightarrow \tyreact \rightarrow \tyreact$ takes a reactive document and a set of signals for each instance, and updates the state of each component with the signal.
    \item $\docviewf : \tyreact \rightarrow \tynode$ replaces instance nodes with their children, creating the final article to display.
\end{itemize}

\noindent Starting with an initial reactive document program $d_0$, the runtime iteratively generates views and steps in the following pattern:
\begin{center}
\begin{tikzcd}
d_0 \arrow[d, "\docviewf"] \arrow[r, "\docstepf"] & 
d_1 \arrow[d, "\docviewf"] \arrow[r, "\docstepf"] & 
d_1 \arrow[d, "\docviewf"] \arrow[r, "\docstepf"] & 
\cdots
\\ a_0 & a_1 & a_2
\end{tikzcd}
\end{center}

\noindent The key case of the step function for instances is as follows:
\begin{align*}
    &\docstepf ~ (signals : (\instid, \msf{Signal})~\msf{map}) ~ (\mkinst{inst}: \tyreact) = \\
    & \ind \msfb{if} ~ inst.\msf{id} \not\in signals ~ \msfb{then} ~\mkinst{\{{inst ~ \msfb{with} ~ \msf{node} = \docstepf ~ signals ~  inst.\msf{node}}\}} ~ \msfb{else} \\
    & \ind \letexp{state'}{inst.\msf{com}.\msf{update} ~ (signals[inst.\msf{id}]) ~ inst.\msf{state}}{} \\
    & \ind \letexp{node'}{inst.\msf{com}.\msf{view} ~ state'}{} \\
    & \ind \letexp{node''}{\reconcile{signals}{node}{node'}}{} \\
    & \ind\mkinst{\{inst ~ \msfb{with} ~ state'; children''\}}
\end{align*}

When an instance receives a signal, then the $\msf{update}$ function generates the new state, and the $\msf{view}$ function generates the new view. However, simply returning the new view would erase all the state contained in child instances. Therefore, we must \emph{reconcile} the old and new views, expressed with the $\msf{reconcile}$ function. The key case is as follows:
\begin{align*}
    &\reconcile{signals}{(\mkinst{i})}{(\mkinst{i'})}  \\ 
    &\ind= \begin{cases}
       \docstepf ~ signals ~ (\mkinst{i)} & \text{if} ~ i.\msf{com} = i'.\msf{com} \wedge i.\msf{props} = i'.\msf{props} \\
       \mkinst{i'} & \text{otherwise}
    \end{cases}
\end{align*}

\noindent If an instance has the same component and properties as before\footnotemark, then reconciliation persists its state and recursively steps the instance. Otherwise, the new instance is returned.

\footnotetext{
    Note that because $i$ and $i'$ are existentially-typed, it is not inherently type-safe to compare their fields. The runtime must provision some way of first determining whether the two instances have the same type.
}

Finally, the view function eliminates all instances from the node tree, with the key case as follows:
$$
\docviewf ~ (\mkinst{i}) = \docviewf ~ i.\msf{node}
$$

This runtime system is sufficient to model an Elm/React-like reactive document language, including per-component state and reconciliation on state updates. We provide an example of formal reasoning about this system in \Cref{sec:reactive-refs}.

\section{Reasoning with the Document Calculus}
\label{sec:reasoning}

Finally, we demonstrate the value of the document calculus as a \emph{formal} foundation (in addition to being a conceptual foundation) by reasoning about the semantics of document programs. Specifically, we prove two theorems: first, we prove that the template desugaring always produces terms of the correct type (\Cref{sec:flat-lists}). Second, we show how to design a provably correct implementation strategy for efficiently composing references and reactivity (\Cref{sec:reactive-refs}).

\subsection{Templates Desugar to Well-Typed Terms}
\label{sec:flat-lists}

We would like to be able to say that our particular desugaring of templates is ``correct'' by some metric. For example, the $\msfb{foreach}$ desugaring in \Cref{sec:article-tmpl-prog} involves both a $\msfb{splice}$ and a $\msf{flatten}$ --- we should be unable to prove some correctness theorem if the desugaring omitted either construct. 

One such theorem is the statement that \emph{templates desugar to well-typed terms}. Specifically, a sugared expression $\treetpl{t}$ should desugar to an expression with type $\tylist{\tynode}$. A sugared expression $\strtpl{t}$ should desugar to an expression with type $\str$. Of course, desugared template terms are only well-typed if used properly. For instance, a program cannot interpolate an expression of the wrong type, or use a unbound variable. That is to say: well-typed inputs lead templates to desugar to well-typed terms.

To capture these ideas, we extend the type system to describe the types of templates. The typing judgments for templates are systematically constructed from their desugaring, roughly following the ``type resugaring'' method developed by \citet{pombrio2018macrotypes}. To deal with the fact that template desugaring is dependent on context, we add a new kind of fact to the typing context that indicates the current template context:
\newcommand{\tplctx}[1]{\msf{tpl} ~ {#1}}
$$
\Gamma ::= \ldots \mid \tplctx{\tau}
$$

\noindent Then the typing rules for templates are as follows:

\begin{mathpar}
    
    \ir{T-StrTpl}
        {\tc{\Gamma, \tplctx{\str}}{t}{\tylist{\str}}}
        {\tc{\Gamma}{\strtpl{t}}{\str}}
        {t-strtpl}

    \ir{T-TreeTpl}
        {\tc{\Gamma, \tplctx{\tynode}}{t}{\tylist{\tynode}}}{\tc{\Gamma}{\treetpl{t}}
        {\tylist{\tynode}}}
        {t-treetpl}

    \ir{T-NilTpl}
        {\ }
        {\tc{\Gamma, \tplctx{\tau}}{[]}{\tylist{\tau}}}
        {t-niltpl}

    \ir{T-ConsTpl}
        {\tc{\Gamma, \tplctx{\tau}}{p}{\tau}  \\\\ 
         \tc{\Gamma, \tplctx{\tau}}{ps}{\tylist{\tau}}}
        {\tc{\Gamma, \tplctx{\tau}}{(p :: ps)}{\tylist{\tau}}}
        {t-constpl}
    
    \ir{TP-Str}
        {\ }
        {\tc{\Gamma, \tplctx{\tau}}{s}{\tau}}
        {tp-str}

    \ir{TP-Node}
        {\tc{\Gamma, \tplctx{\tau}}{t}{\tylist{\tau}} \\\\
         \forall i.~ \tc{\Gamma}{e_i}{\str}}
        {\tc{\Gamma,\tplctx{\tau}}{\node{s}{[(s_i, e_i)*]}{t}}{\tau}}
        {tp-node}

    \ir{TP-Set}
        {\tc{\Gamma}{e}{\tau_e} \\\\ 
         \tc{\Gamma, \tplctx{\tau}, x : \tau_e}{ps}{\tylist{\tau}}}
        {\tc{\Gamma, \tplctx{\tau}}{(\tset{x}{e} :: ps)}{\tylist{\tau}}}
        {tp-set}

    \ir{TP-Splice}
        {\tc{\Gamma}{e}{\tylist{\tau}} \\\\
         \tc{\Gamma, \tplctx{\tau}}{ps}{\tylist{\tau}}}
        {\tc{\Gamma, \tplctx{\tau}}{(\splice{e} :: ps)}{\tylist{\tau}}}
        {tp-splice}

    \ir{TP-Foreach}
        {\tc{\Gamma}{e}{\tylist{\tau_e}} \\
         \tc{\Gamma, \tplctx{\tau}, x : \tau_e}{t}{\tylist{\tau}} \\\\
         \tc{\Gamma, \tplctx{\tau}}{ps}{\tylist{\tau}}}
        {\tc{\Gamma, \tplctx{\tau}}{(\dforeach{e}{x}{t} :: ps)}{\tylist{\tau}}}
        {tp-foreach}

    \ir{TP-If}
        {
         \tc{\Gamma}{e}{\msf{bool}} \\
         \tc{\Gamma, \tplctx{\tau}}{t_1}{\tylist{\tau}} \\\\ 
         \tc{\Gamma, \tplctx{\tau}}{t_2}{\tylist{\tau}} \\ 
         \tc{\Gamma, \tplctx{\tau}}{ps}{\tylist{\tau}}}
        {\tc{\Gamma, \tplctx{\tau}}{(\tif{e}{t_1}{t_2} :: ps)}{\tylist{\tau}}}
        {tp-if}
\end{mathpar}

In general, all templates desugar to terms of type $\tylist{\tau}$ for some element type $\tau$. The rules for each template part lay out the conditions under which the overall template is well-typed. For example, a $\msfb{foreach}$ is well-typed if the input $e$ is a list, if the nested template $t$ is well-typed under the binding $x$, and if the tail $ps$ is well-typed.
If the rules are formulated correctly, then the following theorem should hold:
\begin{theorem}[Desugaring preserves types]
    Let $e \in \expr^\karticle_\tprog$. If $~\tc{\Gamma}{e}{\tau}$ then $~\tc{\Gamma}{\sug{e}{\expr}}{\tau}$.
\end{theorem}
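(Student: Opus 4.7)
The plan is to proceed by mutual structural induction, with three mutually dependent statements corresponding to the three syntactic categories involved in desugaring:
\begin{itemize}
\item[(i)] if $\tc{\Gamma}{e}{\tau}$, then $\tc{\Gamma}{\sug{e}{\expr}}{\tau}$;
\item[(ii)] if $\tc{\Gamma, \tplctx{\tau}}{t}{\tylist{\tau}}$, then $\tc{\Gamma}{\sugwith{t}{\kttext}{\tau}}{\tylist{\tau}}$;
\item[(iii)] if $\tc{\Gamma, \tplctx{\tau}}{p}{\tau}$, then $\tc{\Gamma}{\sugwith{p}{\ktinline}{\tau}}{\tau}$ (for $\msfb{set}$, $\msfb{splice}$, $\msfb{foreach}$, and $\msfb{if}$ the relevant statement bundles the head and tail together because the desugaring is by context).
\end{itemize}
Because the typing rules T-StrTpl, T-TreeTpl, T-ConsTpl, TP-Str, TP-Node, TP-Set, TP-Splice, TP-Foreach, TP-If were each constructed by ``resugaring'' the corresponding desugaring clause, there is a one-to-one correspondence between induction cases and clauses of $\sug{\cdot}{\cdot}$, so each case is essentially bookkeeping.

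I would dispatch the cases in three batches. First, the pure System F cases for (i) are immediate because desugaring is the identity there. Second, for the template-introducing expressions $\strtpl{t}$ and $\treetpl{t}$, I apply (ii) at $\tau=\str$ or $\tau=\tynode$ to the subtemplate and then check that $\msf{join}$ (respectively the identity wrapper) has the claimed System F type. Third, for the template-part cases (iii) I walk through each rule: literal strings desugar according to the template context and match the contextual typing rule directly; $\msfb{node}$ desugars through $\mknode{\cdot}$ whose component types are exactly what TP-Node supplies via (i) and (ii); $\tset{x}{e} :: ps$ desugars to a $\msfb{let}$ whose types come from TP-Set together with (ii) applied to $ps$ under the extended context $x:\tau_e$; $\splice{e} :: ps$ desugars to $\msf{append}$ whose arguments have the right list types by TP-Splice plus (i) and (ii).

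The main obstacle is the $\msfb{foreach}$ and $\msfb{if}$ cases, whose desugarings are defined by rewriting into another template containing $\msfb{splice}$ and then recursively invoking $\sug{\cdot}{\kttext}$. The recursive call is not on a strict syntactic subterm of the original template, so naive structural induction does not close. My fix is to inline the splice clause once and observe that the composite desugaring equals a fixed System F expression — for foreach, $\msf{append}\,(\msf{flatten}\,(\msf{map}\,(\lambda x.\sug{t}{\kttext})\,\sug{e}{\expr}))\,\sug{ps}{\kttext}$, and analogously for if — in which every recursive desugaring is now on a strict subterm ($e$, $t$, $t_1$, $t_2$, or $ps$). I would therefore prove a small auxiliary lemma stating that this composite equality holds on the nose, and then use the inductive hypotheses on those strict subterms together with the known System F types of $\msf{map}$, $\msf{flatten}$, and $\msf{append}$ to conclude the list type $\tylist{\tau}$. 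Once this is done, the remaining bookkeeping for T-NilTpl and T-ConsTpl is trivial, and the three mutual statements together give the theorem.
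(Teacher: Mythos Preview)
Your proposal is correct and takes essentially the same approach as the paper: the paper isolates your statement (ii) as a standalone lemma proved by induction on the template typing derivation, then derives the theorem from it, and it handles $\msfb{foreach}$ and $\msfb{if}$ exactly as you do, by unfolding the $\msfb{splice}$ clause to obtain the concrete $\msf{append}/\msf{flatten}/\msf{map}$ term and applying the inductive hypotheses to the strict subterms $e$, $t$, and $ps$. Your explicit attention to why the intermediate splice-template is not a structural subterm, and your remedy of inlining once, is a point the paper glosses over by simply computing the two-step unfolding inline; otherwise the arguments coincide.
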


We give the full proof by induction over the derivation of $\tc{\Gamma}{e}{\tau}$ in \appref{sec:proofs}, but here we can provide the intuition for one case, again focusing on $\msfb{foreach}$. Recall the desugaring of $\msfb{foreach}$ in $\msfb{treetpl}$:
\begin{align*}
\sugwith{(\dforeach{e}{x}{\mtext}) :: ps}{\kttext}{\tynode} &= \sug{(\splice{\msf{flatten} ~ (\msf{map} ~ (\lambda x.~ \sug{\mtext}{\kttext}) ~ \sug{e}{\expr})}) :: ps}{\kttext} \\
&= \msf{append} ~ (\msf{flatten} ~ (\msf{map} ~ (\lambda x.~ \sug{\mtext}{\kttext}) ~ \sug{e}{\expr})) ~ \sug{ps}{\kttext}
\end{align*}

By the inductive hypothesis, we can assume that:
\begin{mathpar}
    \tc{\Gamma}{\sug{e}{\expr}}{\tylist{\tau_e}}
    
    \tc{\Gamma, x : \tau_e}{\sug{t}{\kttext}}{\tylist{\tau}}

    \tc{\Gamma}{\sug{ps}{\kttext}}{\tylist{\tau}}
\end{mathpar}

\noindent Then the type of the desugared term can be systematically derived in standard fashion. The $\msf{map}$ term has type $\tylist{\tylist{\tau}}$. The $\msf{flatten}$ term therefore has type $\tylist{\tau}$. The $\msf{append}$ term therefore has type $\tylist{\tau}$. We conclude that the sugared and desugared terms have the same type.

\subsection{Correctly Composing References and Reactivity}
\label{sec:reactive-refs}

As shown in the React case study in \Cref{sec:react-toc}, it requires careful thought to correctly compose efficient reactivity with document features like section references. The content of a reference is a global property of a document based on the number of sections and the location of each section label. This global computation is conceptually at odds with reactivity, which is oriented towards localizing computation to components that aren't aware of their sibling or parent components.

\newcommand{\postprocessf}{\msf{postprocess}}

A simple approach to composing these extensions is to postprocess every reactively-generated document. The simple reactive runtime looks like this:
\begin{center}
\begin{tikzcd}
v_0 \arrow[d, "\docviewf"] \arrow[r, "\docstepf"] & 
v_1 \arrow[d, "\docviewf"] \arrow[r, "\docstepf"] & 
v_1 \arrow[d, "\docviewf"] \arrow[r, "\docstepf"] & 
\cdots
\\ a_0 \arrow[d, "\renderrefsf"] & 
a_1 \arrow[d, "\renderrefsf"] & 
a_2 \arrow[d, "\renderrefsf"]
\\ a_0' & a_1' & a_2'
\end{tikzcd}
\end{center}

However, this strategy is needlessly inefficient. For example, the counter component described earlier in \Cref{sec:reactivity} would never affect the section ordering, and therefore never affect the content of a reference. The $\kidctxt$ $\idctxt$ could be computed once on $a_0$ and then reused for all subsequent computations, or at least until $\idctxt$ is invalidated. For example, if the context was persisted after the first step and invalidated on the second step, then such a strategy would look like this:
\begin{center}
\begin{tikzcd}[column sep=huge]
v_0 \arrow[d, "\docviewf"]  \arrow[r, "\docstepf"] & 
v_1 \arrow[d, "\docviewf"] \arrow[r, "\docstepf"] & 
v_1 \arrow[d, "\docviewf"] \arrow[r, "\docstepf"] & 
\cdots \\
a_0 \arrow[dd, bend right] \arrow[d, "\sectionsf"]  & 
a_1 \arrow[dd, "\replacerefsf" near end] & 
a_2 \arrow[dd, bend right] \arrow[d, "\sectionsf"]  \\ 
\idctxt_0 \arrow[dr, bend left] \arrow[d, "\replacerefsf"] & & \idctxt_1 \arrow[d, "\replacerefsf"] \arrow[dr, bend left]  \\
a_0' & a_1' & a_2' & \ldots
\end{tikzcd}
\end{center}

\newcommand{\dirtyf}{\msf{dirty}}
\newcommand{\dirty}[1]{\dirtyf(#1)}
\newcommand{\naive}[1]{\msf{simple}(#1)}
\newcommand{\incr}[1]{\msf{incr}(#1)}

These two strategies can be formalized as functions $\msf{simple}$ and $\msf{incr}$ that take a given article $a_i$ and produce a final article $a_i'$. Their semantics are as follows:
\begin{align*}
    \naive{a_i} &= \renderrefs{a_i} \\
    \idctxt_i^\msf{incr} &= \begin{cases}
       \sections{a_i} & \text{if} ~ i = 0 \vee \dirty{v_{i - 1}, v_i} \\
       \idctxt^\msf{incr}_{i - 1} & \text{otherwise}
    \end{cases} \\
    \incr{a_i} &= \replacerefs{a_i, \idctxt^\msf{incr}_i}
\end{align*}

The $\dirtyf$ function is the key logic that determines whether $\idctxt$ should be recomputed on a given step. Before considering a specific implementation of $\dirtyf$, we can first articulate a correctness condition for this optimization: the incremental strategy should produce an equivalent document as the naive strategy for all inputs. Formally:

\begin{theorem}[Correctness of incremental strategy for section references]
    Let $e \in \expr_\tprog^\karticle$ where $\tc{\cdot}{e}{\tyreact}$. Let $e \bigsteptoop v_0$. Let $i \in \mathbb{N}$ and $v_i = \docstepf^i(v_0)$. Let $a_i = \docview{v_i}$. Then $\naive{a_i} = \incr{a_i}$.
\end{theorem}

This theorem reduces to the lemma that $\dirtyf$ will always be true if the section IDs have changed from one document to the next, or formally:

\begin{lemma}[Dirty function always catches a change to section numbering]
\label{lem:dirty}
$$\sections{a_{i-1}} \neq \sections{a_{i}} \implies \dirty{v_{i-1}, v_i}$$
\end{lemma}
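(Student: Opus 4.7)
I would prove the contrapositive: $\neg\dirty{v_{i-1},v_i}$ implies $\sections{a_{i-1}} = \sections{a_i}$. This requires first committing to a concrete implementation of $\dirtyf$. A natural conservative choice is to tag every reactive node with a boolean recording whether its subtree contains any $\quot{section}$ node, and to let $\dirty{v,v'}$ return true whenever some instance whose id appears in the step's signal map carries this bit in either $v$ or $v'$.

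The central observation is that $\sections{a}$ depends only on the document-order sequence of section nodes and their $\quot{id}$ attributes, as is immediate from the key case of $\sectionsatf_\tynode$ in \Cref{sec:validity}. Hence it suffices to prove the stronger lemma: if $\neg\dirty{v_{i-1},v_i}$, then $\docview{v_{i-1}}$ and $\docview{v_i}$ contain the same ordered list of section nodes (with the same ids), from which the original lemma follows immediately.

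I would prove this stronger lemma by structural induction on the paired reactive trees, mirroring the case structure of $\docstepf$. Text leaves contribute no sections and are trivial. For ordinary $\msfb{node}$ cases $\docviewf$ commutes with $\docstepf$ on children, so the result follows by IH applied pointwise across the children lists. The crux is the $\mkinst{i}$ case: if $i.\msf{id}\notin signals$, reconciliation preserves $i.\msf{state}$ and the IH applies to $i.\msf{node}$; if $i.\msf{id}\in signals$, the hypothesis $\neg\dirtyf$ forces the section bit on both old and new views to be false, whence both views contribute no section nodes, and reconciliation on their children completes the case by IH.

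The main obstacle is calibrating $\dirtyf$ so that the inductive argument goes through without silently collapsing into recomputing $\sectionsf$. Specifically, a component update can produce an entirely fresh view when props disagree, so the section bit must be correctly recomputed and propagated through $\msf{reconcile}$; otherwise a section node could slip in undetected. A careful statement of the invariant --- that the section-bit on every reactive node faithfully summarises the presence of $\quot{section}$ descendants in $\docviewf$ of that node --- makes reconciliation preserve this invariant by construction, after which the lemma follows.
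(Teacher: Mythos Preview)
Your proposal is sound but diverges from the paper in two respects. First, you choose a different $\dirtyf$: you restrict attention to instances whose id appears in the signal map, whereas the paper's $\dirtyf$ checks \emph{every} instance node encountered in the recursion for $\quot{section}$ among its descendants (it never consults the signal map). Your version is strictly more precise, so the lemma you would be proving is harder; conversely, the paper's cruder $\dirtyf$ makes the lemma nearly immediate. Second, and more substantively, the paper argues directly rather than by contrapositive: from $\sections{a_{i-1}} \neq \sections{a_i}$ it extracts a witness section node $n$ present in one tree and absent in the other, then appeals to the shape of $\docstepf$ to conclude that $n$ must lie beneath some instance in one of the two trees, which triggers the paper's $\dirtyf$ by definition. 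This is a three-line existence argument with no induction.

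Your contrapositive-plus-structural-induction route is more rigorous and scales better to the sharper $\dirtyf$ you propose, but it buys that precision at the cost of the section-bit invariant you identify, and of making the paired induction go through across $\msf{reconcile}$. One small redundancy: in your signalled-instance case, once both section bits are false the subtrees contribute no section nodes at all, so the appeal to the IH ``on their children'' after reconciliation is unnecessary---the case closes immediately. The paper's approach trades that machinery for a coarser $\dirtyf$ and a much shorter (if sketchier) argument.
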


For example, a simple implementation of $\dirtyf$ can recognize that the document structure can only change as a result of components. In this implementation, $\dirtyf$ only returns true if any component's children includes a section either before or after the step. Say we have a function $\msf{descendents} : \tyreact \rightarrow \str ~ \msf{set}$ that returns the types of nodes descendent from the input. Then $\dirtyf$ is as follows:
\newcommand{\descendents}[1]{\msf{descendents} ~ {#1}}
\begin{align*}
    &\dirty{\mkinst{inst_{i-1}}, \mkinst{inst_i}} = \\
    &\ind \quot{section} \in (\descendents{inst_{i-1}.\msf{node}} \cup \descendents{inst_i.\msf{node}}) \\ 
    &\ind \vee \dirty{inst_{i-1}.\msf{node}, inst_i.\msf{node}}
\end{align*}

In \appref{sec:proofs} we give a proof of how the theorem reduces to the lemma, as well as the proof of the lemma for this particular definition of $\dirtyf$. More broadly, the point is that this theorem demonstrates how the document calculus provides a foundation for reasoning about aspects such as how global document dependencies compose with local reactive computations.

\section{Related Work}
\label{sec:relatedwork}

The impetus for this work is that, in fact, very little work has attempted to provide a formal foundation for document languages from a computational perspective. An informal description of the Scribe language\,\cite{reid1980scribe} was published in POPL 43 years ago. \textit{The \TeX{}book}\,\cite{knuth1986texbook} gives a fairly precise specification for \TeX, but its principal concerns are parsing and rendering---less so the computation in the middle. The @-syntax of Scribble\,\cite{flatt2009scribble} is well-defined but its metatheory is not, leading in part to issues as in \Cref{sec:scribble-study}. Within efforts to formalize languages with templates like PHP\,\cite{filaretti2014php}, templates are usually a small footnote within the broader project, rather than a central focus of investigation.

In the rest of this section, we focus on understanding the practical systems that we model in this paper. Document languages have come a long way since the 1960s when ``only a few dozen people in the world knew how to typeset mathematical formulas''\,\cite{knuth1996talk}. Most academic research on document languages in the 20th century focused on vocabulary and abstractions for the graphical aspects of documents (\Cref{sec:markup-systems}). Such work largely continues today in the form of the ever-growing complexity of web browser rendering engines. The 1990s saw an explosion of languages for generating strings with templates (\Cref{sec:template-systems}). In the new millennium, document language research has shifted focus to the aspects more at the heart of our paper, namely using computation to generate articles (\Cref{sec:computational-documents}). Today, the most complex interactions between content and computation can be found in reactive document systems (\Cref{sec:rw-reactive}).

\subsection{Markup Systems}
\label{sec:markup-systems}

``Markup languages'' have a long history as programming languages for marking up documents that are presented on paper (literally via printing, or metaphorically in a PDF), or presented in the browser.
\citet{coombs1987markup} developed an early markup theory that distinguished ``procedural markup'', or low-level graphical commands, from ``descriptive markup'', or high-level structuring of a document. Descriptive markup, later called a ``ordered hierarchy of content objects''\,\cite{derose1997text}, formed the basis of systems such as SGML\,\cite{goldfarb1990sgml} and HyTime\,\cite{newcomb1991hytime} that would go on to inspire HTML and XML. The article domain of the document calculus defined in \Cref{sec:article-calc} is a model for descriptive markup, the predominant model for document languages used today. (The notable exception to this is \TeX, which evaluates into procedural markup but uses ``environments'' to attempt to simulate the experience of  descriptive markup.)

Early markup systems had relatively primitive support for computation. The Scribe system\,\cite{reid1980scribe} only supported custom environments that were composed out of a fixed set of formatting attributes. 
This tradition has continued with modern markup languages. Languages like Markdown\,\cite{cmark-spec}, AsciiDoc\,\citeurl{https://asciidoc.org/}, reStructuredText\,\citeurl{https://docutils.sourceforge.io/rst.html}, Markdoc\,\citeurl{https://markdoc.dev/}, and Pandoc\,\citeurl{https://pandoc.org/} all have little native support for anything resembling computation. At most, these languages have capabilities for resolving references or performing textual substitution for global variables defined in an external config file.

One exception is \TeX{}, which has a powerful macro system. It is notable that only 1 out of 27 chapters of \textit{The \TeX{}book}\,\cite{knuth1986texbook} concerns macros, a reflection of how many tasks \TeX{} had to juggle at the time of its inception. Later computational markup systems like Typst\,\cite{maedje2022typst} reflect a significantly greater separation of concerns between computation and rendering.

\subsection{String Template Systems}
\label{sec:template-systems}

Unhygienic macro systems like the C preprocessor and the M4 processor\,\cite{m4} can be viewed as the ur-template-systems (what we call a \emph{string template program} in the document calculus). Format strings (i.e., \emph{string template literals}) also date back to the earliest programming languages such as the \verb|PICTURE| clause in COBOL, \verb|WRITE| command in Fortran, and \verb|printf| function in Algol 68. Variable interpolation in strings can be found in several early shell languages, and was later adopted by Perl and Tcl. If-statements and for-loops inside templates were popularized by PHP, which used these facilities primarily to generate strings of HTML. However, PHP is today the only widely-used general-purpose programming language (to our knowledge) with built-in support for string template programs --- such features in other languages are expressed usually through domain-specific languages such as Jinja\,\citeurl{https://jinja.palletsprojects.com/} for Python, Handlebars\,\citeurl{https://handlebarsjs.com/} for Javascript, or Liquid\,\citeurlwith{https://shopify.github.io/liquid/}{shopify.github.io/liquid} for Ruby. 

Using string template programs to generate article literals can be prone to error, as discussed in the PHP case study in \Cref{sec:php-case-study}. This observation motivated \citet{parr2004template}, who gave one of the first formal models of string template literals. Parr's model is roughly equivalent to the $\dstrtmplstr$ level of the document calculus. Parr's goal was to reason about the computability of template languages (so as to demonstrate that a restricted template language was not Turing-complete), while our goal is to provide a concise model for a wide variety of template features.

\subsection{Article Template Systems}
\label{sec:computational-documents}

As XML and HTML gained popularity, many languages and libraries were developed to provide better ways of creating and analyzing tree-shaped documents. XDuce\,\cite{hosoya2003xduce}, XML Schema\,\cite{simeon2003xml}, and $\mathbb{C}$Duce\,\cite{benzaken2003cduce} provided for typed processing of XML documents, with a focus on encoding domain-specific XML schemata into the type system. The JWIG extension\,\cite{christensen2003jwig} to Java supported XML templates with holes to address the problems of generating structured documents via strings. ``Syntax-safe'' string template engines were designed to ensure that generated strings matched a schema\,\cite{arnoldus2007repleo,heidenreich2009safetpl}.
Scala even supported XML template literals upon its release, although XML support has since been deprecated.

Metaprogramming systems share many similarities with article template systems; programs are trees, and articles are trees. As described in \Cref{sec:article-tmpl-prog},  HTML libraries in modern Lisps permit the use of quasiquotes to generate HTML. To reduce the verbosity of writing string content within Lisp-embedded articles, Skribe's Sk-expressions\,\cite{gallesio20015skribe} and Scribble's @-expressions\,\cite{flatt2009scribble} provide for concise article templates. Our goal of providing statically-typed templates also overlaps with typed metaprogramming systems like MetaML\,\cite{taha1997metaml}, Template Haskell\,\cite{sheard2002haskell}, and Scala macros\,\cite{burmako2013scala}, just to name a few notable systems among many.

A goal of some article template systems is to statically check for the validity of documents, i.e., that a node tree matches a given document schema. XDuce and $\mathbb{C}$Duce are specially designed for this purpose by supporting the langauge of regular trees as types. Haskell libraries like \verb|type-of-html|\,\citeurlwith{https://github.com/knupfer/type-of-html}{knupfer/type-of-html} show how specific schema (like HTML) can be encoded into a sufficiently expressive type system of a general-purpose language. Our goal is to integrate templates into System F in as simple a manner as possible, so the document calculus only checks for the weaker property of well-formedness.

\subsection{Reactive Article Template Systems}
\label{sec:rw-reactive}

While interest in XML templates has since waned, interest in HTML templates has grown dramatically, especially focusing on templates that describe reactive HTML documents. Recent years have seen many new languages for authoring reactive articles: Idyll\,\cite{conlen2018idyll}, MDX\,\citeurl{https://mdxjs.com/}, Observable\,\citeurl{https://observablehq.com/}, and Living Papers\,\citeurlwith{https://github.com/uwdata/living-papers/}{uwdata/living-papers}. In particular, the JSX extension to Javascript, originally created by the developers of React\,\citeurl{https://reactjs.com/}, is now widely adopted within the Javascript ecosystem. Frameworks like Vue\,\citeurl{https://vuejs.org/} and SolidJS\,\citeurl{https://solidjs.com/} use JSX, and Svelte\,\citeurl{https://svelte.dev/} uses a JSX-like syntax. 

Notably, these reactive JS frameworks all provide substantively different desugarings for JSX into vanilla Javascript. The desugaring provided in \Cref{sec:reactivity} is most similar to React's, where the desugaring is straightforward and the runtime does most of the work. However, more recent frameworks like Svelte have adopted much more complex desugarings to improve efficiency. Svelte statically analyzes its templates for data dependencies to determine when components should react to state changes, avoiding the cost of dynamic dependency analysis. This trend provides a fertile ground for future PL research that can build on the foundations of the document calculus. Just as one example, Svelte's dependency analysis is deeply unsound, as it is not sensitive to fields or aliases (see: \href{https://svelte.dev/tutorial/updating-arrays-and-objects}{svelte.dev}). Future document languages will need firm theoretical foundations to correctly analyze and desugar complex templates.

\section{Discussion}
\label{sec:discussion}

This paper has presented the document calculus, a formal model for how templates interleave content and computation to produce strings and articles. Our immediate goal with this work is to provide a formal model that can undergird any theoretical investigation into document languages. \Dl{}s have long been a subject with a plethora of practice but only tacit theory, especially with regards to the computation/content boundary. 

Our long-term goal with this work is to provide conceptual clarity to designers of document languages. We hope that the vocabulary and semantics of the document calculus can guide future designs (this paper came out of the authors' own work in designing a document language). We conclude by discussing actionable takeaways for \dl{} designers (\Cref{sec:takeaways}), and then discuss one of the major challenges unaddressed in this paper: concrete syntax (\Cref{sec:concrete-syntax}).


\subsection{Implications for Language Designers}
\label{sec:takeaways}

The taxonomy in \Cref{tab:taxonomy} provides a high-level vocabulary for talking about the design space of \dl{}. A \dl{} designer should ask: which domain and constructors are most appropriate for their context of use? With regards to constructors, template literals can be quite powerful with an expressive language of expressions. But an imperative language with limited expressions would probably need template programs or else the template DSL is too limiting. 

With regards to domain, a language designer should be aware that designing templates for both strings and articles does not require wholly different features. A shared template language can be used across both domains; the language just needs different syntaxes for invoking a template in each domain context (i.e., a string template $\msfb{strtpl}$ versus a tree template $\msfb{treetpl}$).

The semantics in Sections \ref{sec:string-calc} and \ref{sec:article-calc} provide one possible implementation strategy for template desugaring. We strongly recommend a pure strategy over an impure strategy for the reasons discussed in the PHP case study (\Cref{sec:php-case-study}). We recommend a variable-binding desugaring that permits lexical scope and does not require a single global context (\Cref{sec:string-tmpl-lit}). We also recommend carefully considering the dimensionality of lists produced by each template feature to avoid dimension mismatch, such as by providing a splice/quasiquote feature (\Cref{sec:article-tmpl-prog}).

Designers should be aware that reducing expressions to values is not likely to be the last step in the document generation pipeline (\Cref{sec:article-extensions}). Global passes such as section numbering (\Cref{sec:references}) and reforestation (\Cref{sec:reforesting}) should execute on the reduced ``raw'' document. These passes have a subtle interaction with reactivity (\Cref{sec:reactivity}). We provide an example for how these separate concerns can be composed in \Cref{sec:reactive-refs}.

\subsection{Concrete Syntax for Language Users}
\label{sec:concrete-syntax}

Concrete syntax is not a common concern in programming languages research, where it is assumed to be handled through standard parsing techniques. But concrete syntax is \emph{essential} to \dl{}s, moreso than most other kinds of programming languages (on par with languages for novices or compact DSLs). The fundamental utility of a \dl{} is predicated on its syntactic convenience. Most authors would not want to write at only the $\dartprog$ level, like this:
\vspace{-0.5em}
\begin{minted}{javascript}
[text("We live in a "), bold(text("golden age")), text(" of documents.")]
\end{minted}

\vspace{-0.5em}
In that sense, this paper's subtitle is deliberately inaccurate: lambda is not the ultimate document. As Olin Shivers wrote, ``lambda is not a universally sufficient value constructor,'' and that holds true for constructing documents as well. To that end, future work on document languages should investigate the design of syntaxes that trade-off intuitiveness, error-tolerance, and systematicity.

For instance, Markdown's syntax is designed to be reasonably intuitive and maximally error-tolerant, at the expense of systematicity. Taking one example from \citet{beyond-markdown}, Markdown does not have a consistent strategy for parsing lists adjacent to a paragraph. A different behavior occurs depending on whether the list number is equal to 1 or not, as shown in this Markdown program (left) with its HTML output according to CommonMark (right):

\begin{twocol}
    \begin{col}
\begin{minted}{markdown}
A paragraph
1. A list

Another paragraph
2. Another list    
\end{minted}        
    \end{col}
    \begin{col}
\begin{minted}{html}
<p>A paragraph</p>
<ol><li>A list</li></ol>
<p>Another paragraph
2. Another list</p>
\end{minted}
    \end{col}
\end{twocol}

More generally, the widespread adoption of Markdown has demonstrated the strong desire for a concise document syntax. Yet, authors also want computation to simplify authoring of complex documents, as evinced by both the enduring usage of \LaTeX{} and the proliferation of ``Markdown++'' successor languages. It is an open question how to get the best of both worlds --- a human-friendly, concise syntax with a principled, powerful semantics. Now is clearly the time to revisit the accumulated design decisions of past languages to build the foundations for a better-documented future.

\section*{Data Availability Statement}

The archival version of the artifact for this document is hosted on Zenodo\,\cite{crichton_2023_8409115}. The latest version of the artifact is hosted on GitHub\,\citeurlwith{https://github.com/cognitive-engineering-lab/document-calculus}{cognitive-engineering-lab/document-calculus}.

\begin{acks}
    This work was partially supported by a gift from Amazon and by the US NSF under Grant No.~2319014. We are grateful to the numerous individuals who have worked on the document languages that influenced our work.
\end{acks}

\bibliographystyle{ACM-Reference-Format}
\bibliography{bibs/misc}

\ifcameraready
\makeatletter
\par\bigskip\noindent\small\normalfont\@received\par
\makeatother
\fi

\newpage
\appendix
\section{Appendix}

\subsection{Definitions}
\label{sec:additional-rules}

Here, we provide any static and dynamic semantics omitted in \Cref{sec:doc-calc}.

\subsubsection{String literal calculus} \ \\

\noindent Syntax:
\begin{align*}
    \val_\lit^\str~ v ::= s
\end{align*}

\noindent Static semantics:
\begin{mathpar}
    \inferrule{\ }{\tc{\Gamma}{s}{\str}}
\end{mathpar}

\subsubsection{String program calculus} \ \\

\noindent Syntax:
\begin{align*}
    \msf{TyCtxt}^\str_\prog~ \Gamma ::=~ &\cdot \mid \Gamma, x : \tau \mid \Gamma, \alpha \\
    \val^\str_\prog~ v &::= v^\str_\lit \mid \lambda (x : \tau).~ e \mid \Lambda \alpha.~ e \mid \{(\ell: v_\ell)^*\} \mid \inject{v}{\ell}{\tau} \mid \msfb{fold}_\tau~ v \\
    \msf{EvalCtxt}~ E^\str_\prog ::=~ &[\cdot] \mid E ~ e \mid v ~ E \mid \letexp{x}{E}{e} \mid \{(\ell : E_\ell)^*\} \mid E.\ell \\
    \mid~ & \inject{E}{\ell}{\tau} \mid \msfb{case} ~ E ~ \{(\ell(x) \Rightarrow e_\ell)^*\}  \\    
    \mid~ &\msfb{fold}_\tau ~ E \mid \msfb{unfold}_\tau ~ E \mid \msfb{pack} ~ (E, \tau_1) ~ \msfb{as} ~ \tau_2 \mid \msfb{unpack} ~ (x, \alpha) = E ~ \msfb{in} ~ e
\end{align*}

\noindent Static semantics:
\begin{mathpar}
\inferrule
    {\tc{\Gamma}{e_1}{\str} \\ \tc{\Gamma}{e_2}{\str}}
    {\tc{\Gamma}{e_1 + e_2}{\str}}

\inferrule
    {\tc{\Gamma, x : \tau_1}{e}{\tau_2}}
    {\tc{\Gamma}{\lambda(x : \tau_1).~ e}{\tau_1 \rightarrow \tau_2}}

\inferrule
    {\tc{\Gamma}{e_1}{\tau_1 \rightarrow \tau_2} \\
     \tc{\Gamma}{e_2}{\tau_1}}
    {\tc{\Gamma}{e_1 ~ e_2}{\tau_2}}

\inferrule
    {\tc{\Gamma, x : \tau}{e}{\tau}}
    {\tc{\Gamma}{\msfb{fix}(x : \tau). ~ e}{\tau}}

\inferrule
    {\tc{\Gamma}{e_1}{\tau_1} \\
     \tc{\Gamma, x : \tau_1}{e_2}{\tau_2}}
    {\tc{\Gamma}{\letexp{x}{e_1}{e_2}}{\tau_2}}

\inferrule
    {\forall \ell.~ \tc{\Gamma}{e_l}{\tau_l}}
    {\tc{\Gamma}{\{(\ell: e_\ell)^*\}}{\{(\ell: \tau_\ell)^*\}}}

\inferrule
    {\tc{\Gamma}{e}{\{(\ell: \tau_\ell)^*\}} \\
     \ell \in \{\ell_1, \ldots, \ell_n\}}
    {\tc{\Gamma}{e.\ell}{\tau_\ell}}

\inferrule
    {\tc{\Gamma}{e}{\tau_\ell} \\
     \ell \in \{\ell_1, \ldots, \ell_n\}}
    {\tc{\Gamma}{\inject{e}{\ell}{\tau_\ell}}{\tysum{(\ell: \tau_\ell)^*}}}

\inferrule
    {\tc{\Gamma}{e}{\tysum{(\ell: \tau_\ell)^*}} \\
     \forall \ell.~ \tc{\Gamma, x : \tau_\ell}{e_\ell}{\tau}}
    {\tc{\Gamma}{\msfb{case} ~ e ~ \{(\ell(x) \Rightarrow e_\ell)^*\}}{\tau}}

\inferrule
    {\tau = \mu \alpha.~ \tau' \\
     \tc{\Gamma}{e}{\tau'[\alpha \rightarrow \tau]}}
    {\tc{\Gamma}{\msfb{fold}_{\tau} ~ e}{\tau}}

\inferrule
    {\tau = \mu \alpha.~ \tau' \\
     \tc{\Gamma}{e}{\tau}}
    {\tc{\Gamma}{\msfb{unfold}_{\tau} ~ e}{\tau'[\alpha \rightarrow \tau]}}  

\inferrule
    {\tc{\Gamma, \alpha}{e}{\tau}}
    {\tc{\Gamma}{\Lambda \alpha.~ e}{\forall \alpha.~ \tau}}

\inferrule
    {\tc{\Gamma}{e}{\forall \alpha.~ \tau}}
    {\tc{\Gamma}{e[\tau']}{\tau[\alpha \rightarrow \tau']}}

\inferrule
    {\tc{\Gamma}{e_1}{\tau_2[\alpha \rightarrow \tau_1]}}
    {\tc{\Gamma}{\msfb{pack} ~ (e_1, \tau_1) ~ \msfb{as} ~ \exists \alpha.~ \tau_2}{\exists \alpha.~ \tau_2}}
  
\inferrule
    {\tc{\Gamma}{e_1}{\exists \alpha.~ \tau_1} \\
     \tc{\Gamma, x : \tau_1, \alpha}{e_2}{\tau_2}}
    {\tc{\Gamma}{\msfb{unpack} ~ (x, \alpha) = e_1 ~ \msfb{in} ~ e_2}{\tau_2}}
\end{mathpar}

\noindent Dynamic semantics:
\begin{mathpar}
\inferrule
  {e \mapsto e'}
  {E[e] \mapsto E[e']}

\inferrule
  {s_3 = s_1 + s_2}
  {s_1 + s_2 \mapsto s_3}

\inferrule
  {\ }
  {(\lambda (x : \tau).~ e) ~ v \mapsto e[x \mapsto v]}

\inferrule
  {\ }
  {\msfb{fix}(x : \tau).~ e \mapsto e[x \mapsto \msfb{fix}(x : \tau).~ e]}

\inferrule
  {\ }
  {\letexp{x}{v}{e} \mapsto e[x \mapsto v]}

\inferrule
  {\ }
  {\{(\ell : v_\ell)^*\}.\ell \mapsto v_\ell}

\inferrule
  {\ }
  {\msfb{case} ~ (\inject{v}{\ell}{\tau}) ~ \{(\ell(x) \Rightarrow e_\ell)^*\} \mapsto e_\ell[x \rightarrow v]}

\inferrule
  {\ }
  {\msfb{unfold}_\tau ~ \msfb{fold}_\tau ~ v \mapsto v}

\inferrule
  {\ }
  {v[\tau] \mapsto v}

\inferrule
  {\ }
  {\msfb{unpack} ~ (x, \alpha) = (\msfb{pack} (v, \tau_1) ~ \msfb{as} ~ \tau_2) ~ \msfb{in} ~ e \mapsto e[x \mapsto v]}
\end{mathpar}

\subsection{Proofs}
\label{sec:proofs}

Here, we provide proofs for the theorems articulated in \Cref{sec:reasoning}.

\begin{theorem}[Desugaring preserves types]
  Let $e \in \expr^\karticle_\tprog$. If $~\tc{\Gamma}{e}{\tau}$ then $~\tc{\Gamma}{\sug{e}{\expr}}{\tau}$.
\end{theorem}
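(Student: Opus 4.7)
The plan is to proceed by mutual induction over the three typing judgments introduced in the paper: the expression judgment $\tc{\Gamma}{e}{\tau}$, the template judgment $\tc{\Gamma, \tplctx{\tau}}{t}{\tylist{\tau}}$, and the template-part judgment $\tc{\Gamma, \tplctx{\tau}}{p}{\tau}$. The inductive hypothesis for each returns a well-typed desugaring at the same type. Because the desugaring of parts like $\tset{x}{e}$, $\splice{e}$, $\dforeach{e}{x}{t}$, and $\tif{e}{t_1}{t_2}$ consumes the tail $ps$ of the enclosing template, I phrase the template induction so these cases are discharged at the granularity of a whole cons cell $p :: ps$, matching the shape of the rules TP-Set, TP-Splice, TP-Foreach, and TP-If.

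For the System F fragment, desugaring acts as the identity on the outer form and homomorphically on subterms, so those cases follow directly from the IH. The two genuinely new expression forms are $\strtpl{t}$ and $\treetpl{t}$: rule T-StrTpl requires $\tc{\Gamma, \tplctx{\str}}{t}{\tylist{\str}}$, and the IH gives $\tc{\Gamma}{\sug{t}{\kttext}}{\tylist{\str}}$, after which $\msf{join}$ brings us to $\str$; the T-TreeTpl case is the same with $\tynode$ in place of $\str$. For the template layer, T-NilTpl is immediate, T-ConsTpl follows from the IH on head and tail, and the base part rules TP-Str and TP-Node check out by reading off the types of $\mktext{s}$ and $\mknode{\cdot}$ in their respective contexts.

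The substantive work is in the cons-level template rules. For TP-Set, the desugaring to a $\msfb{let}$ binding of $e$ over $\sug{ps}{\kttext}$ is well-typed by the IH applied to $e$ and to $ps$ under the extended context $\Gamma, \tplctx{\tau}, x : \tau_e$. For TP-Splice, the desugaring to $\msf{append}~\sug{e}{\expr}~\sug{ps}{\kttext}$ is well-typed by reading off the type of $\msf{append}$ at $\tau$. The TP-Foreach and TP-If cases are where desugaring is not a plain reconstruction: it rewrites to another template whose head is a $\msfb{splice}$, which must itself be desugared. I expect this to be the main obstacle, because a naive appeal to the IH on the rewritten template does not go through directly — the rewritten template is not a subterm of the original.

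I plan to handle this by unfolding the recursive desugaring one step, obtaining $\msf{append}~(\msf{flatten}~(\msf{map}~(\lambda x.~ \sug{t}{\kttext})~\sug{e}{\expr}))~\sug{ps}{\kttext}$ for TP-Foreach and the analogous $\msf{append}~(\tif{\sug{e}{\expr}}{\sug{t_1}{\kttext}}{\sug{t_2}{\kttext}})~\sug{ps}{\kttext}$ for TP-If. These forms mention only strict subterms of the original template, so the IH applies to each. Composing the types along the chain — $\sug{t}{\kttext} : \tylist{\tau}$, then $\msf{map}~\ldots : \tylist{\tylist{\tau}}$, then $\msf{flatten}~\ldots : \tylist{\tau}$, and finally $\msf{append}~\ldots~\sug{ps}{\kttext} : \tylist{\tau}$ — yields the desired conclusion, exactly as the paper sketches for the Foreach case. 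Equivalently, one can set up a strong induction on a well-founded measure counting uses of $\msfb{foreach}$ and $\msfb{if}$ strictly above uses of $\msfb{splice}$, which justifies invoking the IH on the once-rewritten template directly.
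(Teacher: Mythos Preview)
Your proposal is correct and takes essentially the same approach as the paper: the paper factors the template-level claim into a separate lemma (if $\tc{\Gamma,\tplctx{\tau}}{t}{\tylist{\tau}}$ then $\tc{\Gamma}{\sugwith{t}{\kttext}{\tau}}{\tylist{\tau}}$) proved by induction on the template typing derivation, then derives the theorem from it, whereas you phrase this as a single mutual induction---but the case analysis and the key maneuver are identical. In particular, your handling of \textsc{TP-Foreach} and \textsc{TP-If} by unfolding the $\msfb{splice}$-headed rewrite one step to reach the explicit $\msf{append}/\msf{flatten}/\msf{map}$ form, so that only strict subderivations remain and the IH applies, is exactly what the paper does.
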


The theorem hinges on a key lemma that relates the typing judgments for templates to template desugaring:

\begin{lemma}[Template desugaring produces lists of expected type]
  Let $t \in \kttext^\karticle_\tprog$. If $~\tc{\Gamma, \tplctx{\tau}}{t}{\tylist{\tau}}$ then $~\tc{\Gamma}{\sugwith{t}{\kttext}{\tau}}{\tylist{\tau}}$.
\end{lemma}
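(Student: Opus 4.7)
The plan is to prove the lemma by mutual induction with the main theorem on the derivation of the typing judgment. The mutual structure is essential: a template part such as $\tset{x}{e}$ or $\splice{e}$ desugars to an expression that embeds $\sug{e}{\expr}$, so the part's well-typedness depends on the theorem; conversely, the theorem's interesting cases $\strtpl{t}$ and $\treetpl{t}$ desugar to expressions built from $\sugwith{t}{\kttext}{\tau}$, so they depend on the lemma.

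For the lemma, I proceed by case analysis on the last rule of the derivation. T-NilTpl is immediate since $\sug{[]}{\kttext} = []$ trivially has type $\tylist{\tau}$. For T-ConsTpl, the generic rule $\sug{p :: ps}{\kttext} = \sug{p}{\ktinline} :: \sug{ps}{\kttext}$ dispatches TP-Str and TP-Node by combining the IH on $ps$ with the per-part rebuild (wrapping $s$ as $\mktext{s}$ or building an $\mknode{\ldots}$ whose attribute expressions and children are recursively desugared). The TP-Set case produces $\letexp{x}{\sug{e}{\expr}}{\sug{ps}{\kttext}}$; well-typedness follows by invoking the main theorem on $e$ to get the binder's type and the IH on $ps$ in the extended context. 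TP-Splice is similar, relying on the polymorphic type of $\msf{append}$ to combine the main theorem on $e$ with the IH on $ps$.

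For the main theorem, the only cases that do not follow from a standard homomorphism on System F are $\strtpl{t}$ and $\treetpl{t}$. In $\treetpl{t}$, the T-TreeTpl premise gives $\tc{\Gamma, \tplctx{\tynode}}{t}{\tylist{\tynode}}$, so by the lemma the desugared term $\sugwith{t}{\kttext}{\tynode}$ has type $\tylist{\tynode}$, which is precisely the goal. The $\strtpl{t}$ case is analogous, with an outer $\msf{join} : \tylist{\str} \to \str$ yielding the required type $\str$. All the base System F cases are handled by a direct congruence argument since the desugaring acts as the identity on those constructors.

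The main obstacle is the TP-Foreach case (and symmetrically TP-If), because its desugaring is \emph{non-structural}: $\sugwith{(\dforeach{e}{x}{t'}) :: ps}{\kttext}{\tau}$ rewrites to another template whose head is a $\splice{\ldots}$, with the same tail $ps$. A naive induction on template length fails here. My plan is to sidestep well-founded induction by manually unfolding one further step through the TP-Splice case, reducing the goal to typing $\msf{append} ~ (\msf{flatten} ~ (\msf{map} ~ (\lambda x.~ \sug{t'}{\kttext}) ~ \sug{e}{\expr})) ~ \sug{ps}{\kttext}$ directly. Each constituent typing follows from the IH applied to \emph{strict} subderivations: the main theorem on $e$ yields a list type, the IH on $t'$ under the context extended by $x$ types the lambda body at $\tylist{\tau}$, and the IH on $ps$ gives the append's second argument; $\msf{map}$, $\msf{flatten}$, and $\msf{append}$ then chain together to deliver $\tylist{\tau}$. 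The TP-If case is handled identically but without the flatten/map layer.
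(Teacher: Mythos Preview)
Your proposal is correct and follows essentially the same approach as the paper: induction on the typing derivation, with the \textsc{TP-Foreach} and \textsc{TP-If} cases handled by manually unfolding through the $\msfb{splice}$ desugaring and then typing the resulting $\msf{append}/\msf{flatten}/\msf{map}$ term directly via the inductive hypotheses on strict subderivations. If anything you are slightly more careful than the paper: you make the mutual dependence between lemma and theorem explicit (the paper's lemma silently assumes $\sug{e}{\expr}$ is type-preserving when $e$ appears inside $\msfb{splice}$ or $\msfb{foreach}$), and you cover the \textsc{TP-Set} case, which the paper's proof omits.
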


We will prove this lemma, and the theorem shortly follows.

\begin{proof}
Proceed by induction over the derivation of $\tc{\Gamma, \tplctx{\tau}}{t}{\tylist{\tau}}$.
\begin{itemize}
  \item \Cref{tr:t-niltpl}: $t = []$ and $\sug{t}{\kttext} = []$. Clearly $\tc{\Gamma}{[]}{\tylist{\tau}}$.
  \item \Cref{tr:t-constpl}: $t = p :: ps$ and $\tc{\Gamma, \tplctx{\tau}}{p}{\tau}$ and $\tc{\Gamma, \tplctx{\tau}}{ps}{\tylist{\tau}}$ and $\sug{p :: ps}{\kttext} = \sug{p}{\ktinline} :: \sug{ps}{\kttext}$. By the IH, $\tc{\Gamma}{\sugwith{ps}{\kttext}{\tau}}{\tylist{\tau}}$, so just need to show that $\tc{\Gamma}{\sugwith{p}{\ktinline}{\tau}}{\tau}$. Case on the derivation $p : \tau$:
  \begin{itemize}
    \item \Cref{tr:tp-str}: $p = s$. Case on the possible types of $\tau$:
    \begin{itemize}
        \item $\tau = \str$. Then $\sugwith{s}{\ktinline}{\str} = s$. Therefore $\tc{\Gamma}{s}{\str}$.
        \item $\tau = \tynode$. Then $\sugwith{s}{\ktinline}{\tynode} = \mktext{s}$. Therefore $\tc{\Gamma}{\mktext{s}}{\tynode}$.
    \end{itemize}
    \item \Cref{tr:tp-node}: $p = \node{s}{[(s_i, e_i)^*]}{t}$ and $\tc{\Gamma}{e_i}{\str}$ and $\tc{\Gamma, \tplctx{\tau}}{t}{\tylist{\tau}}$. Only possible type for $\tau$ is $\tau = \tynode$. Then:
    $$\sug{\node{s}{at}{t}}{\ktinline} = \mknode{\typrod{\msf{name}: s,~ \msf{attrs}: [(s, \sug{e}{\expr})^*],~ \msf{children}: \sugwith{t}{\kttext}{\tynode}}}$$
    By the IH, $\tc{\Gamma}{\sugwith{t}{\kttext}{\tynode}}{\tylist{\tynode}}$. Therefore $\tc{\Gamma}{\sugwith{p}{\ktinline}{\tynode}}{\tynode}$.
  \end{itemize}
  
  \item \Cref{tr:tp-splice}: $t = \splice{e} :: ps$ and $\tc{\Gamma}{e}{\tylist{\tau}}$ and $\tc{\Gamma, \tplctx{\tau}}{ps}{\tylist{\tau}}$. Then:
  $$
  \sug{(\splice{e}) :: ps}{\kttext} = \msf{append}~\sug{e}{\expr}~\sug{ps}{\kttext}
  $$
  By the IH, $\tc{\Gamma}{\sugwith{ps}{\kttext}{\tau}}{\tylist{\tau}}$. By the definition of $\msf{append}$, then $\tc{\Gamma}{\sugwith{t}{\kttext}{\tau}}{\tylist{\tau}}$.

  \item \Cref{tr:tp-if}: $t = \tif{e}{t_1}{t_2} :: ps$ and $\tc{\Gamma}{e}{\msf{bool}}$ and $\tc{\Gamma, \tplctx{\tau}}{t_1}{\tylist{\tau}}$ and $\tc{\Gamma, \tplctx{\tau}}{t_2}{\tylist{\tau}}$. Then:
  \begin{align*}
   &\sugwith{(\tif{e}{t_1}{t_2}) :: ps}{\kttext}{\tau} \\
   &= \sug{(\splice{\tif{\sug{e}{\expr}}{\sug{t_1}{\kttext}}{\sug{t_2}{\kttext}}}) :: ps}{\kttext} \\
   &= \msf{append} ~ (\tif{\sug{e}{\expr}}{\sug{t_1}{\kttext}}{\sug{t_2}{\kttext}}) ~ \sug{ps}{\kttext}
  \end{align*}
  As in the previous case, it suffices to show that the first argument to $\msf{append}$ has type $\tylist{\tau}$. This follows from the IH which shows that $\tc{\Gamma}{\sugwith{t_1}{\kttext}{\tau}}{\tylist{\tau}}$ and similarly for $t_2$.

  \item \Cref{tr:tp-foreach}: $t = \dforeach{x}{e}{t} :: ps$ and $\tc{\Gamma}{e}{\tylist{\tau_e}}$ and $\tc{\Gamma, x : \tau_e, \tplctx{\tau}}{t}{\tylist{\tau}}$. Then:
  \begin{align*}
    &\sugwith{(\dforeach{e}{x}{\mtext}) :: ps}{\kttext}{\tynode} = \\ 
    &= \sug{(\splice{\msf{flatten} ~ \msf{map} ~ (\lambda x.~ \sug{\mtext}{\kttext}) ~ \sug{e}{\expr}}) :: ps}{\kttext} \\ 
    &= \msf{append} ~ (\msf{flatten} ~ (\msf{map} ~ (\lambda x.~ \sug{\mtext}{\kttext}) ~ \sug{e}{\expr})) ~ \sug{ps}{\kttext}
  \end{align*}
  Again, it suffices to show that the $(\msf{flatten} ~ \ldots)$ term has type $\tylist{\tau}$. By the definition of $\msf{map}$, because the lambda has type $\tau_e \rightarrow \tylist{\tau}$ and $\sug{e}{\expr} : \tylist{\tau_e}$, then the call has type $\tylist{\tylist{\tau}}$. The $\msf{flatten}$ call reduces the dimension to $\tylist{\tau}$.
\end{itemize}
\end{proof}

Now, we can prove the original theorem.
\begin{proof}
Proof by induction over the derivation of $\tc{\Gamma}{e}{\tau}$. For all terms, $\sug{e}{\expr} = e$ except for the case of templates (or expressions containing templates), so the theorem is trivial in those cases. We therefore focus on the two template cases:
\begin{itemize}
    \item \Cref{tr:t-strtpl}: $e = \strtpl{t}$ and $\tc{\Gamma, \tplctx{\str}}{t}{\tylist{\str}}$ and $\tc{\Gamma}{e}{\str}$. Then:
    $$
    \sug{\strtpl{t}}{\expr} = \msf{join} ~ \sugwith{t}{\kttext}{\str}
    $$
    By the lemma above, we have that $\tc{\Gamma}{\sugwith{t}{\kttext}{\str}}{\tylist{\str}}$. Then by the definition of $\msf{join}$, we have that $\tc{\Gamma}{\sug{e}{\expr}}{\str}$.

    \item \Cref{tr:t-treetpl}: $e = \treetpl{t}$ and $\tc{\Gamma, \tplctx{\tynode}}{t}{\tylist{\tynode}}$ and $\tc{\Gamma}{e}{\tylist{\tynode}}$. Then:
       $$
    \sug{\treetpl{t}}{\expr} = \sugwith{t}{\kttext}{\tynode}
    $$
     By the lemma above, we have that $\tc{\Gamma}{\sugwith{t}{\kttext}{\tynode}}{\tylist{\tynode}}$. Therefore, $\tc{\Gamma}{\sug{e}{\expr}}{\tylist{\tynode}}$.
\end{itemize}
\end{proof}

\begin{theorem}[Correctness of incremental strategy for section references]
    Let $e \in \expr_\tprog^\karticle$ where $\tc{\cdot}{e}{\tyreact}$. Let $e \bigsteptoop v_0$. Let $i \in \mathbb{N}$ and $v_i = \docstepf^i(v_0)$. Let $a_i = \docview{v_i}$. Then $\naive{a_i} = \incr{a_i}$.
\end{theorem}

\begin{proof}
    First, we substitute the definitions of relevant functions to obtain an elaborated correctness condition:
    \begin{align*}
    &\naive{a_i} = \incr{a_i} \\
    \iff& \renderrefs{a_i} = \replacerefs{a_i, \idctxt^\msf{incr}_i} \\
    \iff& \replacerefs{a_i, \sections{a_i}} = \replacerefs{a_i, \idctxt^\msf{incr}_i}
    \end{align*}
    From this condition, we can deduce:
    $$
    \sections{a_i} = \idctxt^\msf{incr}_i \implies \naive{a_i} = \incr{a_i}
    $$
    That is, to prove the full correctness property, it suffices to show that $\sections{a_i} = \idctxt^\msf{incr}_i$. We proceed by induction over $i$. 
    \begin{itemize}
        \item If $i = 0$, then $\sections{a_0} = \idctxt^\msf{incr}_0$ by definition of $\idctxt^\msf{incr}_0$.
        \item If $i > 0$, then assume $\sections{a_{i-1}} = \idctxt^\msf{incr}_{i-1}$.  Then we have:
        $$
        \dirty{v_{i_1}, v_i} \vee  \sections{a_{i-1}} = \sections{a_i} \implies \sections{a_i} = \idctxt^\msf{incr}_i
        $$
        The antecedent can be rewritten by logical equivalence:
        \begin{align*}
        &\dirty{v_{i-1}, v_i} \vee  \sections{a_{i-1}}  = \sections{a_i}\\
        \iff& \sections{a_{i-1}} \neq \sections{a_i} \implies \dirty{v_{i-1}, v_i}
        \end{align*}
    \end{itemize}
    Assume that $\sections{a_{i-1}} \neq \sections{a_i}$. Then we want to show that $\dirty{v_{i-1}, v_i}$. 

    By the definition of $\sectionsf$, there exists a node $n = \node{\quot{section}}{at}{ns}$ such that (WLOG) $n \not\in \descendents{v_{i-1}}$ and $n \in \descendents{v_i}$. By the definition of $\docstepf$, there exists an instance $inst$ in both $v_{i-1}$ and $v_i$ such that $n \in \descendents{inst}_i$. and $n \not\in \descendents{inst}_{i-1}$. By the definition of $\dirtyf$, then $\dirty{v_{i-1}, v_i}$ is true because  $n \in \descendents{inst}_i$.
\end{proof}

\setcounter{TotPages}{27}

\end{document}